\title{Abstract clones for abstract syntax}
\author%
  {Nathanael Arkor}
  {University of Cambridge, UK \and \url{https://www.cl.cam.ac.uk/~na412/}}
  {na412@cl.cam.ac.uk}
  {https://orcid.org/0000-0002-4092-7930}
  {}
\author%
  {Dylan McDermott}
  {Reykjavik University, Iceland \and \url{https://dylanm.org/}}
  {dylanm@ru.is}
  {https://orcid.org/0000-0002-6705-1449}
  {Icelandic Research Fund project grant \textnumero{}~196323-053}
\authorrunning{N. Arkor and D. McDermott}
\keywords{simple type theories, abstract clones, second-order abstract syntax, substitution, variable binding, presentations, free algebras, induction, logical relations}
\newcommand{\catname}[1]{\mathbf{#1}}
\newcommand{\Clone}[1]{\catname{Clone}(#1)}
\newcommand{\Set}{\catname{Set}}
\newcommand{\Alg}{\operatorname{\catname{-Alg}}}
\newcommand{\Model}{\mathcal M}
\newcommand{\STLCTerms}{\Lambda}
\newcommand{\STLCBetaEtaTerms}{\Lambda_{\beta\eta}}
\newcommand{\STLCBetaEtaBoolTerms}{\Lambda_{\beta\eta\clonename{Bool}}}
\newcommand{\Ty}{\mathsf{Ty}}
\newcommand{\To}{\Rightarrow}
\newcommand{\syntax}[1]{\mathsf{#1}}
\newcommand{\sapp}{\syntax{app}}
\newcommand{\sget}{\syntax{get}}
\newcommand{\sput}{\syntax{put}}
\newcommand{\strue}{\syntax{true}}
\newcommand{\sfalse}{\syntax{false}}
\newcommand{\sif}[3]{\syntax{if}~#1~\syntax{then}~#2~\syntax{else}~#3}
\newcommand{\sort}[1]{\mathsf{#1}}
\newcommand{\A}{A}
\newcommand{\B}{B}
\newcommand{\basety}{\sort b}
\newcommand{\ec}{\diamond}
\newcommand{\oper}[1]{\mathsf{#1}}
\newcommand{\presname}[1]{\mathbf{#1}}
\newcommand{\SIGMA}{\presname \Sigma}
\newcommand{\tmsubst}[2]{#1 \{#2\}}
\newcommand{\metasubst}[2]{#1 \{#2\}}
\newcommand{\tmeq}{\approx}
\newcommand{\Tm}[1][]{\mathsf{Term}^{#1}}
\newcommand{\Var}[1]{\mathsf{Var}_{#1}}
\newcommand{\M}{\textsl{\textsc{m}}}
\newcommand{\incfree}[1]{#1}
\newcommand{\Bool}{\mathbb B}
\newcommand{\GSPres}[1]{\SIGMA^{\presname{GS}}_{#1}}
\newcommand{\BoolPres}{\SIGMA_{\presname{Bool}}}
\newcommand{\STLCSig}{\Sigma^{\STLCTerms}}
\newcommand{\STLCBetaEta}{\SIGMA^{\STLCBetaEtaTerms}}
\newcommand{\clonename}[1]{\mathbf{#1}}
\newcommand{\PClone}{\clonename P}
\newcommand{\X}{\clonename X}
\newcommand{\Y}{\clonename Y}
\newcommand{\VarClone}[1]{\clonename{Var}_{#1}}
\newcommand{\STLCClone}{\clonename \STLCTerms}
\newcommand{\STLCBetaEtaClone}{\clonename \STLCBetaEtaTerms}
\newcommand{\STLCBetaEtaBoolClone}{\clonename \STLCBetaEtaBoolTerms}
\newcommand{\TmClone}[1][]{\clonename{Term}^{#1}}
\newcommand{\BoolClone}[1][]{\clonename{Bool}}
\newcommand{\GSClone}[1]{\clonename{GS}_{#1}}
\newcommand{\novalueparen}[1]{\IfNoValueF{#1}{(#1)}}
\NewDocumentCommand{\var}{o}{\mathsf{var}^{\novalueparen{#1}}}
\newcommand{\substfunction}[1][]{\mathsf{subst}_{#1}}
\newcommand{\subst}[2]{#1 [#2]}
\newcommand{\lift}[1][]{\mathsf{lift}_{#1}}
\newcommand{\rename}{\triangleright}
\NewDocumentCommand{\wk}{o}{\mathsf{wk}^{\novalueparen{#1}}}
\NewDocumentCommand{\weaken}{o}{\mathsf{weaken}^{\novalueparen{#1}}}
\newcommand{\bind}[1]{{\Uparrow^{#1}}}
\newcommand{\sem}[1]{\llbracket #1 \rrbracket}
\newcommand{\compose}{\circ}
\newcommand{\extend}[1]{#1^\dagger}
\newcommand{\substpred}[1]{#1^{\sharp}}
\newcommand{\osubstpred}[1]{#1^{\flat}}
\newcommand{\ie}{i.e.}
\newcommand{\tuple}[1]{\langle#1\rangle}
\newcommand{\btt}{\mathrm{tt}}
\newcommand{\bff}{\mathrm{ff}}
\newcommand{\Ne}{\mathsf{Ne}}
\newcommand{\Nf}{\mathsf{Nf}}
\newcommand{\brho}{\boldsymbol\rho}
\newcommand{\bsigma}{\boldsymbol\sigma}
\newtheorem{notation}[theorem]{Notation}
\begin{document}

\maketitle

\begin{abstract}
  We give a formal treatment of simple type theories, such as the
  simply-typed $\lambda$-calculus, using the framework of abstract
  clones.
  Abstract clones traditionally describe first-order structures, but by
  equipping them with additional algebraic structure, one can further
  axiomatize second-order, variable-binding operators.
  This provides a syntax-independent representation of simple type
  theories.
  We describe multisorted second-order presentations, such as the
  presentation of the simply-typed $\lambda$-calculus, and their clone-theoretic algebras; free
  algebras on clones abstractly describe the syntax of simple type
  theories quotiented by equations such as $\beta$- and $\eta$-equality.
  We give a construction of free algebras and derive a corresponding induction
  principle, which facilitates syntax-independent proofs of properties such as adequacy and normalization for simple type theories.
  Working only with clones avoids some of the complexities inherent in
  presheaf-based frameworks for abstract syntax.
\end{abstract}

\section{Introduction}

The abstract concept of type theory is crucial in the study of programming languages. However, while it is generally appreciated that the concrete syntax associated to a type theory is peripheral to its fundamental structure, conventional techniques for working with type theories and proving properties thereof are predominantly syntactic. The primary reason for this incongruity is that, though abstract frameworks for defining and reasoning about general classes of type theories have been developed (e.g. \cite{fiore2010algebraic, fiore2010second, arkor2020algebraic, fiore2013multiversal, hirschowitz2010modules, ahrens2016modules, ahrens2019modular, hirschowitz2020modules}, there called \emph{second-order abstract syntax}), the mathematical prerequisites are significant and often appear unapproachable to those without a firm category theoretic background.
This is regrettable, because these general techniques alleviate much of the rote associated to syntactic proofs, such as those for adequacy, normalization, and the admissibility of substitution.

It so happens that there exists in the mathematical folklore an approach that is particularly well-suited to capturing the essential structure of simple type theories and yet requires essentially no experience with category theory to employ fruitfully: this is the formalism of \emph{abstract clones} (often simply called \emph{clones}) with algebraic structure. The structure of an abstract clone captures the notion of a context-indexed family of terms, closed under variable projection and substitution; equipping clones with algebraic structure permits the expression of variable-binding operators, like the $\lambda$-abstraction operator familiar from $\lambda$-calculi. It is known amongst cognoscenti that abstract clones might be employed for this purpose: for instance, Fiore, Plotkin, and Turi \cite{fiore1999abstract} proved that abstract clones are equivalent to their notion of \emph{substitution monoids}, which represent families of (unityped) terms with an associated capture-avoiding substitution operation; later, Fiore and Mahmoud \cite{mahmoud2011second, fiore2014functorial} proved that abstract clones with algebraic structure are equivalent to the \emph{$\Sigma$-monoids} of Fiore et al., which extend substitution monoids with second-order (\ie{} variable binding) algebraic structure. In a separate line of inquiry, Hyland \cite{hyland2012classical} uses abstract clones with algebraic structure to give a modern treatment of the unityped $\lambda$-calculus. However, it does not appear that abstract clones have previously been expressly proposed for the study of simple type theories (in fact, the definition of a typed abstract clone with algebraic structure is absent from the literature).

Here, we give an exposition of the use of abstract clones with algebraic structure in defining simple type theories and proving various of their properties.
After setting up the relevant definitions (\cref{sec:clones}), we
describe how simple type theories can be modelled by algebras of
second-order presentations (\cref{sec:second-order}).
We then show that free algebras exist, giving an abstract description of
the syntax of the type theory (\cref{sec:free-algebras}).
We derive an induction principle~\cite{lehmann1981algebraic} that enables abstract reasoning about
the syntax (\cref{sec:induction}), and show that this is powerful enough
to prove non-trivial properties of type theories, in particular using
logical relations (\cref{sec:logical-relations}).
We also compare the clone-theoretic framework to other approaches
(\cref{comparison}).
Though we do not expect our treatment to be surprising to experts
familiar with prior categorical developments, it is an important
perspective in the understanding of simple type theories and deserves explication.

Though we occasionally make reference to category theory throughout the
paper, knowledge of category theory is not necessary to understand the
content.

\section{Abstract clones and first-order presentations}
\label{sec:clones}

A typed (or \emph{multisorted}) \emph{abstract clone} \cite{taylor1993abstract}, henceforth simply \emph{clone}, encapsulates the structure of terms in simple contexts, closed under variables and substitution.
Informally, for each context $x_1 : \A_1, \ldots, x_n : \A_n$ and type $B$, where $\A_1$ to $\A_n$ are types (or \emph{sorts}), a clone $\X$ specifies a set of terms $X(\A_1, \ldots, \A_n; B)$, each element of which is considered a term of type $B$ in the context $x_1 : \A_1, \ldots, x_n : \A_n$.
It also specifies terms $\var_i$ representing
the projection of the variable $x_i$ from the context, and
functions
$\substfunction[\Gamma; A_n, \dots, A_n; B] : X(\A_1, \dots, \A_n; B) \times X(\Gamma; \A_1) \times \cdots \times X(\Gamma; \A_n) \to X(\Gamma; B)$
representing simultaneous substitution:
\begin{align*}
    t \in X(\A_1, \ldots, \A_n; B) \quad &\textit{represents} \quad x_1 : \A_1, \ldots, x_n : \A_n \vdash t : B
    \\
    \var[\A_1, \ldots, \A_n]_i \in X(\A_1, \ldots, \A_n; \A_i) \quad &\textit{represents} \quad x_1 : \A_1, \ldots, x_n : \A_n \vdash x_i : \A_i
    \\
    \substfunction[\Gamma; A_1, \dots, A_n; B](t, u_1, \ldots, u_n) \quad &\textit{represents} \quad \Gamma \vdash \tmsubst{t}{x_1 \mapsto u_1, \ldots, x_n \mapsto u_n} : B
\end{align*}
The clone $\X$ is required to satisfy laws expressing that (1)~substituting variables for themselves does nothing; (2)~applying a substitution to a variable results in the term corresponding to that variable in the substitution; and (3)~substitution is associative.

\begin{notation}
    We fix a set $S$ of types (sorts).
    We denote by $S^*$ the free monoid on $S$, \ie{} lists of elements of $S$. Conceptually, contexts $x_1 : \A_1, \ldots, x_n : \A_n$ are given by elements $[\A_1, \ldots, \A_n] \in S^*$, since variable names carry no information.
    We write $\ec \in S^*$ for the empty context, and $\Gamma, \Xi$
    for the concatenation of $\Gamma \in S^*$ and
    $\Xi \in S^*$.
    For contexts $\Gamma, \Delta \in S^*$, where $\Delta = [\A_1, \ldots, \A_n]$, we define $X(\Gamma; \Delta) = \prod_{i\leq n} X(\Gamma; \A_i)$.
    We call the elements $\bsigma \in X(\Gamma; \Delta)$ \emph{substitutions}; a substitution is therefore a tuple $\boldsymbol{\sigma} = (\sigma_1, \ldots, \sigma_n)$ of terms $\sigma_i \in X(\Gamma; \A_i)$.
\end{notation}

\begin{definition}\label{def:clone}
  An \emph{$S$-sorted clone} $\X = (X, \var, \substfunction)$ consists
  of
  \begin{itemize}
    \item for each context $\Gamma \in S^*$ and sort $\A \in S$, a set
      $X(\Gamma; \A)$ of \emph{terms};
    \item for each context $\Gamma \in S^*$, a tuple
      $\var[\Gamma] \in X(\Gamma; \Gamma)$ of \emph{variables};
    \item for each pair of contexts $\Gamma, \Delta \in S^*$ and sort $\A \in S$,
      a \emph{substitution function}
      $\textstyle
        \substfunction[\Gamma; \Delta; \A]
          : X (\Delta; \A) \times X (\Gamma; \Delta)
          \to X (\Gamma; \A)
      $, which we write as
      $\subst t {\boldsymbol\sigma} = \substfunction[\Gamma; \Delta; \A](t,\boldsymbol\sigma)$;
  \end{itemize}
  such that
  \begin{align}
    \subst{\var[\A_1, \dots, A_n]_i}{\boldsymbol\sigma} &= \sigma_i
      &&\text{for each $\boldsymbol\sigma \in X(\Gamma; A_1, \dots, A_n)$ and $i \leq n$}
    \\
    \subst{t}{\var[\Gamma]} &= t
      &&\text{for each $t \in X (\Gamma; \A)$}
    \\
    \subst{t}{\subst{\sigma'_1}{\boldsymbol\sigma},\dots,\subst{\sigma'_m}{\boldsymbol\sigma}}
      &= \subst{(\subst{t}{\boldsymbol\sigma'})}{\boldsymbol\sigma}
      &&\text{for each $t \in X (\Xi; \A)$,
      $\boldsymbol\sigma' \in X (\Delta;\Xi)$,
      $\boldsymbol\sigma \in X (\Gamma; \Delta)$}
  \end{align}
  A \emph{clone homomorphism} $f : \X \to \X'$ consists of a function
  $f_{\Gamma; \B} : X (\Gamma; \B) \to X'(\Gamma; \B)$ for each
  context $\Gamma \in S^*$ and sort $\B \in S$, such that the following
  hold, where $\Delta = [\A_1, \ldots, \A_n] \in S^*$:
  \begin{align*}
    f_{\Delta; \A_i} (\var[\Delta]_i)
      &= {\var[\Delta]_i}'
      &&\text{for each $i \leq n$}
    \\
    f_{\Gamma; \B} (\subst t {\boldsymbol\sigma})
      &= \subst{(f_{\Delta; \B}(t))}
        {f_{\Gamma; \A_1}(\sigma_1), \dots,
         f_{\Gamma; \A_n}(\sigma_n)}\mathrlap{'}
      &&\text{for each $t \in X (\Delta; \B)$,
      $\boldsymbol\sigma \in X (\Gamma; \Delta)\!$}
  \end{align*}
  We write $\Clone S$ for the category of $S$-sorted clones and
  homomorphisms.
\end{definition}

We extend every clone homomorphism $f : \X \to \X'$ to act on
substitutions as follows, where $\Delta = [\A_1, \ldots, \A_n] \in S^*$:
\begin{align*}
    f_{\Gamma; \Delta} & : X(\Gamma; \Delta) \to X'(\Gamma; \Delta) &
    f_{\Gamma; \Delta}&(\boldsymbol\sigma)
    = (f_{\Gamma; \A_1}(\sigma_1), \dots,
       f_{\Gamma; \A_n}(\sigma_n))
\end{align*}

\begin{example}
  We denote by $\VarClone S$ the $S$-sorted \emph{clone of variables}, whose family of terms is given by
  $\Var S (\A_1, \ldots, \A_n; \B) = \{i \mid \A_i = \B\}$;
  whose variables are given by $\var[\Gamma]_i = i$; and whose substitution is given by
  $i[\boldsymbol\sigma] = \sigma_i$.
  $\VarClone S$ is the initial object in $\Clone S$: for any $S$-sorted clone $\X$,
  there is a unique homomorphism $\rename : \VarClone S \to \X$ given by
  $\rename_{\Gamma; \B}(i) = \var[\Gamma]_i$.
\end{example}

\begin{example}\label{eg-monoid}
  The terms of any universal algebra \cite{birkhoff1935structure} form a \emph{monosorted} clone (\ie{} an $S$-sorted clone for which $S$ is a singleton $\{ * \}$). The sets of terms, along with the variables and substitution function, exactly match the classical notions. For instance, monoids form a clone $\mathbf{Mon}$, where $\mathrm{Mon}(\underbrace{*, \ldots, *}_n; *)$ is the free monoid on $n$ elements.
\end{example}

\begin{example}\label{example:stlcclone}
  Let $\Ty$ be the set of sorts freely generated by a base type
  $\basety \in \Ty$ and function types $(A \To B) \in \Ty$ for $A, B \in \Ty$ (precisely, $\Ty$ is the free magma on $\{ \basety \}$).
  The terms of the simply typed $\lambda$-calculus (STLC) form a
  $\Ty$-sorted clone $\STLCClone$.
  Consider terms generated by the following rules:
  \[
    \begin{prooftree}[center=false]
      \infer0{\Gamma, x : A, \Delta \vdash x : A}
    \end{prooftree}
    \qquad
    \begin{prooftree}[center=false]
      \hypo{\Gamma \vdash f : A \To B}
      \hypo{\Gamma \vdash a : A}
      \infer2{\Gamma \vdash \sapp\,f\,a : B}
    \end{prooftree}
    \qquad
    \begin{prooftree}[center=false]
      \hypo{\Gamma, x : A \vdash t : B}
      \infer1{\Gamma \vdash \lambda x : A.\,t : A \To B}
    \end{prooftree}
  \]
  (We write $\sapp$ to distinguish application of $\lambda$-terms from
  application of mathematical functions.
  We also use named variables for readability, identifying
  $\alpha$-equivalent terms.)
  Capture-avoiding simultaneous substitution
  $\tmsubst{t}{x_i \mapsto u_i}_i$ of terms is
  defined in the usual way by recursion on $t$:
  \begin{gather*}
    \tmsubst{x_j}{x_i \mapsto u_i}_i = u_j
    \qquad
    \tmsubst{(\sapp\,f\,a)}{x_i \mapsto u_i}_i
      = \sapp\,(\tmsubst{f}{x_i \mapsto u_i}_i)
          \,(\tmsubst{a}{x_i \mapsto u_i}_i)
    \\
    \tmsubst{(\lambda x : A.\,t)}{x_i \mapsto u_i}_i
      = \lambda y : A.\,(\tmsubst{t}{x_1\mapsto u_1, \dots, x_n\mapsto u_n, x\mapsto y})
  \end{gather*}
  The clone $\STLCClone$ has sets of terms
  $
    \STLCTerms (A_1, \dots, A_n; B)
      = \{ x_1 : A_1, \dots, x_n : A_n \vdash t : B \}
  $, variables $\var[\Gamma]_i = x_i$, and substitution $\subst t
  {\boldsymbol\sigma} = \tmsubst{t}{x_i \mapsto \sigma_i}_i$.

  There is a related $\Ty$-sorted clone $\STLCBetaEtaClone$ of STLC terms up to
  $\beta\eta$-equality,
  defined by quotienting the sets of terms associated to $\STLCClone$ by the equivalence relation $\tmeq_{\beta\eta}$, where
  $\Gamma \vdash t \tmeq_{\beta\eta} t' : A$ is the congruence relation generated by the following rules:
  \[
    \begin{prooftree}
      \hypo{\Gamma, x : A \vdash t : B}
      \hypo{\Gamma \vdash u : A}
      \infer2[$(\beta)$]{\Gamma \vdash
        \sapp\,(\lambda x : A.\,t)\,u
        \tmeq_{\beta\eta} \tmsubst{t}{x \mapsto u} : B}
    \end{prooftree}
    \qquad
    \begin{prooftree}
      \hypo{\Gamma \vdash t : A \To B}
      \infer1[$(\eta)$]{\Gamma \vdash
        t \tmeq_{\beta\eta} \lambda x : A.\,\sapp\,t\,x : A\To B}
    \end{prooftree}
  \]
\end{example}

\begin{remark}
    \label{remark-simple-type-theories}
    We shall only consider abstract clones with \emph{sets} of types. However, as illustrated by the previous example, the types in a simple type theory often have algebraic structure themselves.
    By considering only the underlying set of types, the algebraic structure is forgotten. This simplifies the development, at the cost of some loss of expressivity. By specifying a (monosorted) clone of types, rather than a set, one recovers exactly the \emph{simple type theories} of Arkor and Fiore \cite{arkor2020algebraic}.
\end{remark}

$\Clone{S}$ is a cartesian category, permitting us to combine clones pointwise.
The terminal object $\clonename 1$ is the unique clone in which every
set of terms is a singleton.
The binary product
$\X_1 \times \X_2$
has sets of terms given by products of sets
$
(X_1 \times X_2)(\Gamma; \A)
  = X_1(\Gamma; \A) \times X_2(\Gamma; \A)
$,
variables $\var[\Gamma]_i = (\var[\Gamma]_i, \var[\Gamma]_i)$, and
substitution
$
(t_1, t_2)[(\sigma_{11}, \sigma_{21}), \dots, (\sigma_{1n}, \sigma_{2n})]
  = (t_1[\boldsymbol\sigma_1]
    ,t_2[\boldsymbol\sigma_2])
$.

\begin{remark}
  \label{clones-form-a-variety}
  $S$-sorted abstract clones form a \emph{variety} in the sense of universal algebra; this means that $\Clone S$ is the category of models for a (multisorted) algebraic theory. Such categories are well-behaved, and several of the properties we mention throughout the paper (such as being cartesian) follow abstractly from this observation. We often choose to be more explicit for ease of comprehension, but make note where this abstract perspective is helpful.
\end{remark}

\subsection{Substitution and context extension}
\label{subst-and-renaming}
We briefly consider the structure of substitutions $\boldsymbol\sigma$ in
$S$-sorted clones $\X$, in particular to define various substitutions that
we use below, and to characterize context extension in clones.
If $\boldsymbol\sigma \in X(\Gamma; \Delta)$ and
$\boldsymbol\sigma' \in X(\Delta; \Xi)$ are substitutions, then their
\emph{composition} $(\boldsymbol\sigma' \boldsymbol\compose \boldsymbol\sigma) \in X(\Gamma; \Xi)$
is the substitution $(\sigma'_1[\boldsymbol\sigma], \dots, \sigma'_m[\boldsymbol\sigma])$,
where $m$ is the length of $\Xi$.
The three equations in the definition of a clone (\Cref{def:clone}) equivalently state
(1 \& 2) that $\var$ is the (left- and right-) unit for composition
($
  \var[\Delta] \compose \boldsymbol\sigma
  ~=~\boldsymbol\sigma
  ~=~\boldsymbol\sigma \compose \var[\Gamma]
$);
and (3) that composition is associative
($
  \boldsymbol\sigma'' \compose (\boldsymbol\sigma' \compose \boldsymbol\sigma)
    ~=~(\boldsymbol\sigma'' \compose \boldsymbol\sigma') \compose \boldsymbol\sigma
$).
In fact, this perspective underlies the connection between abstract clones and cartesian multicategories (which may be considered categories whose morphisms have multiple inputs, corresponding to each of the variables in a context): we elaborate on this connection in \Cref{comparison}.

We call the substitutions $\brho \in \Var S(\Gamma; \Delta)$
\emph{variable renamings}. This is justified by observing that $\brho$ selects a variable in the
context $\Delta$ for each variable in $\Gamma$.
If $t \in X(\Delta; \A)$ is a term in some clone $\X$, then
$\subst{t}{\rename \brho} \in X(\Gamma; \A)$ corresponds to the term in which the variables in $t$ have been
renamed according to $\brho$.
A special case of renaming is \emph{weakening}
$
  \wk[\Gamma]_{\Xi}
    = (1, \dots, n) \in \Var S(\Gamma, \Xi; \Gamma)
$. Using weakening and composition, we may define the \emph{lifting} of a
substitution $\boldsymbol\sigma \in X(\Gamma; \Delta)$ to a larger context:
\[
  \lift[\Xi] (\boldsymbol\sigma)
    ~=~(\boldsymbol\sigma \compose (\rename \wk[\Gamma]_{\Xi})
      , \rename(n+1, \dots, n+m))
   ~\in~X(\Gamma, \Xi; \Delta, \Xi)
\]
where $n$ is the length of $\Gamma$ and $m$ is the length of
$\Xi$.

Context extension induces the following operation on clones.
Given an $S$-sorted clone $\X$ and context
$\Xi \in S^*$, we let
$\bind{\Xi} \X$ be the $S$-sorted clone with terms
$(\bind{\Xi} X)(\Gamma; \A) = X(\Gamma, \Xi; \A)$, variables
$(\var[\Gamma, \Xi]_{i})_{i \le n} \in X(\Gamma, \Xi; \Gamma)$,
and substitution
$
  \subst{t}{\bsigma,\rename(n+1, \dots, n+m)} \in X(\Gamma, \Xi; \A)
$
for
$t \in X(\Delta, \Xi; \A)$ and $\bsigma \in X(\Gamma, \Xi; \Delta)$,
where $n$ is the length of $\Gamma$ and $m$ is the length of $\Xi$.
This satisfies a universal property as follows.
Weakening forms a homomorphism
$\weaken[\Xi]_{\X} : \X \to \bind \Xi \X$ that sends
$t \in X(\Gamma; \A)$ to
$\subst{t}{\rename \wk[\Gamma]_{\Xi}} \in X(\Gamma, \Xi; \A)$.
Then, for every homomorphism $g : \bind \Xi \X \to \Y$, we
obtain a homomorphism $g \compose \weaken[\Xi]_{\X} : \X \to \Y$ and
a substitution $g_{\ec; \Xi} (\var[\Xi]) \in Y(\ec; \Xi)$.
Together, these uniquely determine $g$: to give a homomorphism
$g$ is just to give a homomorphism $\X \to \Y$ and a closed term
$\sigma_i$ for each extra variable from $\Xi$.
(From the perspective of algebraic theories, context extension $\bind{\Xi} \X$ corresponds to the construction of the \emph{polynomial} \cite{lambek1988higher} or \emph{simple slice category} \cite{jacobs1999categorical} over $\Xi$.)
\begin{lemma}\label{context-extension-up}
  For each clone
  homomorphism $f : \X \to \Y$ and substitution
  $\boldsymbol\sigma \in Y(\ec; \Xi)$, there is a unique homomorphism
  $g : \bind \Xi \X \to \Y$ such that
  $g \compose \weaken[\Xi]_{\X} = f$ and
  $g_{\ec; \Xi} (\var[\Xi]) = \boldsymbol\sigma$.
\end{lemma}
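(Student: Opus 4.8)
The plan is to exhibit the homomorphism $g$ by an explicit formula, check that it is a homomorphism satisfying the two required equations, and then show that those equations determine $g$ uniquely. For existence, I would set, for each $\Gamma \in S^*$ and $\A \in S$,
$g_{\Gamma; \A}(t) = \subst{f_{\Gamma, \Xi; \A}(t)}{\var[\Gamma], \bsigma^\Gamma}$ for $t \in (\bind \Xi X)(\Gamma; \A) = X(\Gamma, \Xi; \A)$,
where $\bsigma^\Gamma \in Y(\Gamma; \Xi)$ is the weakening of $\bsigma$ to context $\Gamma$ (obtained by composing $\bsigma$ with the unique substitution in $Y(\Gamma; \ec)$). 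Intuitively, $g$ transports $t$ across $f$, leaves the $\Gamma$-variables alone, and substitutes the weakened closed terms $\bsigma$ for the extra $\Xi$-variables. Preservation of variables is immediate from clone law (1), since the variables of $\bind \Xi \X$ at $\Gamma$ are the first $\length\Gamma$ variables of $\Gamma, \Xi$, which $(\var[\Gamma], \bsigma^\Gamma)$ projects back to $\var[\Gamma]$. Preservation of substitution is the main computation: I would expand the $\bind \Xi \X$-substitution $\subst{t}{\bsigma'}$ into the $\X$-substitution $\subst{t}{\bsigma', \rename(n+1, \dots, n+m)}$ (with $n = \length\Gamma$, $m = \length\Xi$), push $f$ through it (it fixes the renaming, as $f$ preserves variables), and apply associativity; the result matches $\subst{g_{\Delta; \A}(t)}{g_{\Gamma; \Delta}(\bsigma')}$ using two facts, namely that $\rename(n+1, \dots, n+m)$ is sent by $(\var[\Gamma], \bsigma^\Gamma)$ to $\bsigma^\Gamma$, and that weakenings of closed terms are stable under further substitution (so $\bsigma^\Delta$ substituted along $g_{\Gamma;\Delta}(\bsigma')$ collapses back to $\bsigma^\Gamma$).

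The two equations then follow quickly. For $g \compose \weaken[\Xi]_\X = f$, weakening a term renames it into the first $\length\Gamma$ variables, which $(\var[\Gamma], \bsigma^\Gamma)$ fixes, so clone law (2) gives $\subst{f(t)}{\var[\Gamma]} = f(t)$. For $g_{\ec; \Xi}(\var[\Xi]) = \bsigma$, at the empty context the substitution $(\var[\ec], \bsigma^\ec)$ is just $\bsigma$, and since $f$ preserves variables, clone law (1) yields $\bsigma$ componentwise.

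For uniqueness, the key is a decomposition, valid in $\bind \Xi \X$, of an arbitrary term $t \in X(\Gamma, \Xi; \A)$: namely $t = \subst{\weaken[\Xi]_\X(t)}{\var[\Gamma, \Xi]}$, where the substitution is taken in $\bind \Xi \X$, $\weaken[\Xi]_\X(t)$ sits at context $\Gamma, \Xi$, and $\var[\Gamma, \Xi] \in (\bind \Xi X)(\Gamma; \Gamma, \Xi)$; this identity is checked by associativity and clone law (2). Given any $g'$ satisfying the two equations, the homomorphism property forces $g'(t) = \subst{f(t)}{g'_{\Gamma; \Gamma, \Xi}(\var[\Gamma, \Xi])}$, so it suffices to show $g'_{\Gamma; \Gamma, \Xi}(\var[\Gamma, \Xi]) = (\var[\Gamma], \bsigma^\Gamma)$. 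Its first $\length\Gamma$ components are genuine variables of $\bind \Xi \X$ at $\Gamma$, hence preserved to $\var[\Gamma]$; each remaining component equals $\weaken[\Gamma]_{\bind \Xi \X}(\var[\Xi]_j)$ (weakening a closed term is substitution along the empty substitution, which unfolds in $\X$ to the renaming $\rename(n+1, \dots, n+m)$), so, because $g'$ preserves substitution and $g'_{\ec; \Xi}(\var[\Xi]) = \bsigma$, it is sent to the weakening $\bsigma^\Gamma$. Hence $g' = g$.

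The main obstacle is the context bookkeeping: throughout one must carefully distinguish the $\length\Gamma$ genuine variables of $\bind \Xi \X$ from the $\length\Xi$ extra variables, and correctly translate substitution and weakening in $\bind \Xi \X$ into the corresponding operations in $\X$. The substitution-preservation calculation and the uniqueness decomposition are where index errors are most likely; the remaining verifications are routine applications of the three clone laws.
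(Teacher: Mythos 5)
Your proposal is correct and takes essentially the same route as the paper's proof: the same explicit formula $g_{\Gamma; \A}(t) = \subst{(f_{\Gamma,\Xi;\A}(t))}{\var[\Gamma], \bsigma \compose \rename\wk[\ec]_{\Gamma}}$ for existence, and uniqueness via the same decomposition of $t \in X(\Gamma,\Xi;\A)$ through $\weaken[\Xi]_{\X}(t)$ together with preservation of variables and substitution by the candidate homomorphism. The only difference is one of detail: the paper leaves the verification that the defined $g$ is a homomorphism satisfying the two equations implicit, whereas you spell out those checks (correctly).
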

\begin{proof}
  Suppose $g$ is such a homomorphism.
  Then, for each $t \in X(\Gamma, \Xi; A)$, we have
  $
    g_{\Gamma; A}(t)
    =
    \subst
      {(g_{\Gamma, \Xi; A}(\weaken[\Xi]_{\X}(t)))}
      {\var[\Gamma],
        (g_{\ec; \Xi}(\var[\Xi])) \compose \rename\wk[\ec]_\Gamma}
    =
    \subst
      {(f_{\Gamma, \Xi; A}(t))}
      {\var[\Gamma], \bsigma \compose \rename\wk[\ec]_{\Gamma}}
  $,
  where the first equality uses preservation of variables and
  substitution by $g$, and the second uses the assumptions on $g$.
  Hence, $g$ is unique when it exists.
  For existence, define
  $
    g_{\Gamma; A}(t)
    =
    \subst
      {(f_{\Gamma, \Xi; A}(t))}
      {\var[\Gamma], \bsigma \compose \rename\wk[\ec]_{\Gamma}}
  $.
\end{proof}
Substitutions $\boldsymbol\sigma \in Y(\ec; \Xi)$ are in natural bijection with
homomorphisms $\bind{\Xi}{\Var S} \to \Y$, and so \Cref{context-extension-up} equivalently states that $\bind {\Xi} \X$ is the coproduct
of $\X$ and $\bind{\Xi}{\Var S}$.
(This contrasts with presheaf-based
frameworks~\cite{fiore1999abstract,hofmann1999semantical}, in which
context extension is exponentiation.)

\subsection{First-order presentations}

Clones describe collections of terms closed under variable projection and substitution.
We will frequently be interested in clones equipped with extra structure, so as, for example, to interpret the operations of a given type theory. \emph{Presentations} permit the axiomatization of clones that interpret various operations, subject to sets of axioms; while the \emph{algebras} for a given presentation are exactly those clones that satisfy the axiomatization. Later, we will see how clones may be freely generated from presentations, allowing one to define a clone simply by specifying its generating operators and axioms.

Our treatment of first-order presentations is the classical notion of presentation for multisorted universal algebra \cite{birkhoff1970heterogeneous, goguen1985completeness}.

\begin{definition}
  An \emph{$S$-sorted first-order signature} $\Sigma$ consists of a set
  $\Sigma(\Gamma; \B)$  for each $(\Gamma; \B) \in S^* \times S$.
  We call the elements $\oper o \in \Sigma (\Gamma; \B)$ the
  \emph{$(\Gamma; \B)$-ary operators}.
  \emph{Terms over $\Sigma$} are generated by the following rules:
  \[
    \begin{prooftree}[center=false]
      \infer0{\Gamma, x : \A, \Delta \vdash x : \A}
    \end{prooftree}
    \qquad
    \begin{prooftree}[center=false]
      \hypo{\Gamma \vdash t_1 : \A_1}
      \hypo{\cdots}
      \hypo{\Gamma \vdash t_n : \A_n}
      \infer3[\ \textnormal{($\oper o \in \Sigma (\A_1, \dots, \A_n; \B)$)}]{\Gamma \vdash {\oper o}(t_1, \dots, t_n) : \B}
    \end{prooftree}
  \]
  An \emph{$(\A_1, \dots, \A_n; \B)$-ary term} $t$ over $\Sigma$ is a term
  $x_1 : \A_1, \dots, x_n : \A_n \vdash t : \B$, and
  an \emph{$(\Gamma; \B)$-ary equation} over $\Sigma$ is a pair
  $(t, u)$ of $(\Gamma; \B)$-ary terms.
  An \emph{$S$-sorted first-order presentation} $\SIGMA = (\Sigma, E)$
  consists of an $S$-sorted first-order signature $\Sigma$ and, for each
  $(\Gamma; \B) \in S^* \times S$, a set $E (\Gamma; \B)$ of
  $(\Gamma; \B)$-ary equations.
\end{definition}

\begin{remark}
  Observe that the operators of a signature correspond to terms in the logic specified below (namely, first-order equational logic). In particular, a $(\Gamma; \B)$-ary operator $\oper o$, where $\Gamma = [\A_1, \ldots, \A_n] \in S^*$, may be thought of either as a function $\oper o : \A_1, \ldots, \A_n \to \B$, or as a term $x_1 : \A_1, \ldots, x_n : \A_n \vdash \oper o : B$. These perspectives are complementary, and mirror the practice in categorical logic of representing terms by morphisms.
\end{remark}

\begin{definition}\label{def:first-order-logic}
  If $\Gamma \vdash u_i : \A_i$ for $i \leq n$ and
  $x_1 : \A_1, \dots x_n : \A_n \vdash t : \B$ are terms over an
  $S$-sorted first-order signature $\Sigma$, their substitution
  $\Gamma \vdash \tmsubst{t}{x_1 \mapsto u_1, \ldots, x_n \mapsto u_n} : \B$ is defined by
  recursion on $t$ in the usual way.
  The \emph{equational logic} over an $S$-sorted first-order
  presentation $\SIGMA = (\Sigma, E)$ consists of the following rules for the congruence of $\tmeq$ under operations and substitution,
  together with reflexivity, symmetry and transitivity of $\tmeq$:
  \begin{gather*}
    \begin{prooftree}[center=false]
      \hypo{\Gamma \vdash t_1\,{\tmeq}\,u_1\,{:}\,\A_1}
      \hypo{\cdots}
      \hypo{\Gamma \vdash t_n\,{\tmeq}\,u_n\,{:}\,\A_n}
      \infer3[\ \textnormal{($\oper o \in \Sigma (\A_1, \dots, \A_n; \B)$)}]{\Gamma \vdash \oper o(t_1, \dots, t_n) \tmeq \oper o(u_1, \dots, u_n) : \B}
    \end{prooftree}
    \\[2ex]
    \begin{prooftree}[center=false]
      \hypo{\Gamma \vdash t'_1 \tmeq u_1' : \A_1}
      \hypo{\cdots}
      \hypo{\Gamma \vdash t'_n \tmeq u_n' : \A_n}
      \infer3
        [\ \textnormal{($(t, u) \in E (\A_1, \dots, \A_n; \B)$)}]
        {\Gamma \vdash \tmsubst{t}{x_i \mapsto t'_i}_i \tmeq \tmsubst{u}{x_i \mapsto u'_i}_i : \B}
    \end{prooftree}
  \end{gather*}
  The terms over $\Sigma$ form a clone
  $\TmClone[\SIGMA] = (\Tm[\SIGMA], \var, \substfunction)$,
  where $\Tm[\SIGMA] (\Gamma; \A)$ is the set of $\tmeq$-equivalence
  classes of $(\Gamma; \A)$-ary terms; the variables are
  $\var[\Gamma]_i = x_i$; and substitution is
  $\subst t {\boldsymbol\sigma} = \tmsubst{t}{x_i \mapsto \sigma_i}_i$.
  A clone $\X$ is \emph{presented by} $\SIGMA$ when
  $\TmClone[\SIGMA]$ is isomorphic to $\X$ in $\Clone S$ (that is, when there are homomorphisms $\TmClone[\SIGMA] \rightleftarrows \X$ that are mutually inverse).
\end{definition}

\begin{remark}
  A clone may have many different presentations: for instance, the clone $\mathbf{Mon}$ of monoids (\Cref{eg-monoid}) may be presented by a unit and a binary multiplication operation, or by an $n$-ary multiplication operation for each $n \in \mathbb N$ (subject to suitable axioms).
\end{remark}

\begin{example}\label{example:global-state}
  Fix a finite set $V = \{v_1, \dots, v_k\}$ of \emph{values}.
  The $\Ty$-sorted presentation $\GSPres V$ of \emph{global $V$-valued state}
  has a $\smash{(\underbrace{\basety,\dots,\basety}_k; \basety)}$-ary operator $\oper{get}$, a
  $(\basety; \basety)$-ary operator $\oper{put}_{v_i}$ for each
  $i \leq k$, and equations
  \begin{align*}
    x : \basety \vdash
      \oper{get} (\oper{put}_{v_1}(x), \dots, \oper{put}_{v_k}(x))
      &\tmeq
      x : \basety
      &&
    \\
    x_1 : \basety, \dots, x_k : \basety \vdash
      \oper{put}_{v_i} (\oper{get}(x_1, \dots, x_k))
      &\tmeq
      \oper{put}_{v_i}(x_i) : \basety
      &&\text{for each $i \leq k$}
    \\
    x : \basety \vdash
      \oper{put}_{v_i} (\oper{put}_{v_j}(x))
      &\tmeq
      \oper{put}_{v_j}(x) : \basety
      &&\text{for each $i, j \leq k$}
  \end{align*}
  Informally, the term $\oper{get}(t_1, \dots, t_n)$ gets the current
  value $v_i$ of the state and then continues as $t_i$, while the term
  $\oper{put}_{v_i}(t)$ sets the state to $v_i$ and then continues as $t$.
  (In \cref{example:stlc-initial} below, we combine this presentation
  with the STLC to obtain a call-by-name calculus with global state.
  In call-by-name calculi, effects occur at base types, so it is only necessary to axiomatize
  $\oper{get}$ and $\oper{put}_{v_i}$ operators for $\basety \in \Ty$, rather than for all types.)
  We denote by $\GSClone V$ the clone $\TmClone[\GSPres V]$ arising from
  the presentation $\GSPres V$.
\end{example}

\section{Second-order presentations}
\label{sec:second-order}

Just as first-order presentations describe algebraic structure, second-order presentations describe binding algebraic structure \cite{fiore1999abstract}. Variable-binding operators are prevalent in type theory: for instance, the $\lambda$-abstraction operator of the STLC, $\oper{let}\text{-}\oper{in}$ expressions in functional programming languages, and case-splitting in calculi with sum types.
Second-order presentations are similar to first-order presentations, except that each operator must describe its binding structure, \ie{} how many variables (and of what types) it binds in each operand.
Hence, while first-order arities have the form
$(\A_1, \dots, \A_n; \B) \in S^*\times S$, second-order arities have
the form
$
  ((\Delta_1; \A_1), \dots, (\Delta_n; \A_n); \B)
    \in (S^* \times S)^* \times S
$.
Operators of such an arity take $n$ arguments of types $\A_1, \dots \A_n$
and produce terms of type $\B$:
the length of the context $\Delta_i \in S^*$ is the number of variables
bound by the $i$\textsuperscript{th} argument; and the argument types are given by the list $\Delta_i$.
First-order operators may be expressed as second-order operators that bind no variables.

\begin{definition}
  An \emph{$S$-sorted second-order signature} \cite{fiore1999abstract, fiore2010second} consists of a set
  $\Sigma(\Psi; \B)$ for each
  $(\Psi;\B) \in (S^*\times S)^* \times S$.
  We call the elements $\oper o \in \Sigma (\Psi; \B)$ the
  \emph{$(\Psi; \B)$-ary operators}.
\end{definition}

\begin{example}
  The $\Ty$-sorted second-order signature $\STLCSig$ of the STLC
  consists of an $((\ec; A {\To} B), (\ec;A); B)$-ary operator $\oper{app}$
  and an $((A; B); (A {\To} B))$-ary operator $\oper{abs}$ for each
  $A, B \in \Ty$.
  Thus each application operator $\oper{app}$ has two arguments, neither of which bind variables;
  and each $\lambda$-abstraction operator $\oper{abs}$ has one argument, which binds one variable.
\end{example}

Just as the axioms of first-order presentations are expressed in first-order equational logic,
the axioms of second-order presentations are expressed in the second-order equational logic of Fiore and Hur \cite{fiore2010second}. Second-order equational logic extends the first-order setting with \emph{metavariables} \cite{aczel1978general, hamana2004free, fiore2008second}, which conceptually stand for parameterized placeholders for terms. Each variable $x : \A$ in first-order logic has an associated type $\A \in S$; correspondingly, each metavariable $\M : (\A_1, \ldots, \A_n; \A)$ has an associated context and type (called \emph{second-order arities} in \cite{arkor2020algebraic}). $\M$ may be thought of as a variable parameterized by $n$ terms of types $\A_1$ through $\A_n$; a nullary ($n = 0$) metavariable behaves like an ordinary variable. There are several alternative ways to describe second-order equational logic \cite{arkor2020higher}, but we follow Fiore and Hur \cite{fiore2010second} in associating to each term both a variable context and a metavariable context: a metavariable context $\Psi$ is a list of context--sort pairs $(\Delta; \A) \in S^* \times S$.
The judgment $\Psi~|~\Delta \vdash t : \A$ expresses that the term $t$ has sort $\A$ in
variable context $\Delta$ and metavariable context $\Psi$.
Below, we write $\vec x$ for a list $x_1, \dots, x_n$ of variables, $\vec x.\,t$ to
indicate binding of the variables $\vec x$ in $t$, and write
$\vec x : \Delta$ as an abbreviation of
$x_1 : \A_1, \dots, x_n : \A_n$ for $\Delta = [\A_1, \dots, \A_n]$.

\begin{definition}
  Suppose $S$ is a set and $\Sigma$ is an $S$-sorted second-order
  signature.
  \emph{Terms over $\Sigma$} are generated by the following rules for
  variables, metavariables, and operators:
  \begin{gather*}
    \begin{prooftree}
      \infer0{\Psi~|~\Gamma, x : \A, \Delta \vdash x : \A}
    \end{prooftree}
    \\
    \begin{prooftree}
      \hypo{\Psi, \M : (\A_1, \dots, \A_n; \B), \Phi~|~\Gamma \vdash t_1 : A_1}
      \hypo{\mspace{-12mu}\cdots\mspace{-12mu}}
      \hypo{\Psi, \M : (\A_1, \dots, \A_n; \B), \Phi~|~\Gamma \vdash t_n : A_n}
      \infer3{\Psi, \M : (\A_1, \dots, \A_n; \B), \Phi~|~\Gamma \vdash \M(t_1, \dots, t_n) : \B}
    \end{prooftree}
    \\
    \begin{prooftree}
      \hypo{\Psi~|~\Gamma, \vec x_1 : \Delta_1 \vdash t_1 : \A_1}
      \hypo{\mspace{-16mu}\cdots\mspace{-16mu}}
      \hypo{\Psi~|~\Gamma, \vec x_n : \Delta_n \vdash t_n : \A_n}
      \infer3[\textnormal{($\oper o\in\Sigma((\Delta_1;\A_1),\dots,(\Delta_n;\A_n);\B)$)}]
      {\Psi~|~\Gamma \vdash \oper o ((\vec x_1.\,t_1)
        , \dots, (\vec x_n.\,t_n)) : \B}
    \end{prooftree}
  \end{gather*}
  A \emph{$((\Delta_1; \A_1), \dots, (\Delta_n; \A_n); \B)$-ary term}
  over $\Sigma$ is a term
  $\M_1 : (\Delta_1, \A_1), \dots, \M_n : (\Delta_n; \A_n)~|~\ec$ $\vdash t : \B$, and a
  \emph{$(\Psi; \B)$-ary equation} is a pair $(t, u)$ of
  $(\Psi; \B)$-ary terms.
  An \emph{$S$-sorted second-order presentation} $\SIGMA = (\Sigma, E)$
  consists of an $S$-sorted second-order signature $\Sigma$ and, for
  each $(\Psi; \B) \in (S^* \times S)^* \times S$, a set
  $E (\Psi; \B)$ of $(\Psi; \B)$-ary equations over $\Sigma$.
\end{definition}

Multisorted second-order presentations may essentially be taken as a definition of \emph{simple type theory} (modulo the subtlety regarding type operators described in \Cref{remark-simple-type-theories}): just as the informal notion of \emph{algebra} was formalized through the framework of universal algebra \cite{birkhoff1935structure}, so second-order presentations facilitate a precise, formal definition of simple type theory \cite{arkor2020algebraic}.

\begin{example}\label{example:stlcpres}
  The operators of the signature $\STLCSig$ of the STLC present the following rules:
  \[
    \begin{prooftree}
      \hypo{\Psi~|~\Gamma \vdash f : A \To B}
      \hypo{\Psi~|~\Gamma \vdash a : A}
      \infer2{\Psi~|~\Gamma \vdash \oper{app}(f, a) : B}
    \end{prooftree}
    \qquad
    \begin{prooftree}
      \hypo{\Psi~|~\Gamma, x : A \vdash t : B}
      \infer1{\Psi~|~\Gamma \vdash \oper{abs}(x.\,t) : A \To B}
    \end{prooftree}
  \]
  We can then give, for each $A, B \in \Ty$, an
  $((A; B), (\ec; A); B)$-ary equation for
  $\beta$-equality, and an $((\ec; A{\To}B); (A{\To}B))$-ary equation for
  $\eta$-equality:
  \[
    \begin{array}{rr@{~}c@{~}ll@{\qquad\quad~}r}
      \M_1 : (A; B), \M_2 : (\ec; A)~|~\ec \vdash&
        \oper{app}(\oper{abs}(x.\,\M_1(x)), \M_2())
        &\tmeq& \M_1(\M_2()) &: B
        &(\beta)
      \\
      \M : (\ec; A {\To} B)~|~\ec \vdash&
        \oper{abs}(x.\,\oper{app}(\M(), x))
        &\tmeq& \M() &: A \To B
        &(\eta)
    \end{array}
  \]
  The signature $\STLCSig$ together with these equations forms the
  $\Ty$-sorted second-order presentation $\STLCBetaEta$ of
  the STLC with $\beta\eta$-equality.
  Note that second-order equations permit the expression of \emph{axiom schemata}, as axioms containing metavariables (in both the traditional and precise sense of the term ``metavariable'') \cite{fiore2013multiversal, arkor2020algebraic}. Without second-order equations, one would have to add $\beta$ and $\eta$ equations for each instantiation of the metavariables in the rules above.
\end{example}

\begin{definition}
  If $(\Psi~|~\Gamma \vdash u_i : \A_i)_i$ and
  $\Psi~|~x_1 : \A_1, \dots, x_n : \A_n \vdash t : \B$
  are terms over an $S$-sorted second-order signature $\Sigma$, then
  their \emph{substitution}
  $\Psi~|~\Gamma \vdash \tmsubst{t}{x_i \mapsto u_i}_i : \B$ is defined
  by recursion on $t$:
  \begin{gather*}
    \tmsubst{x_j}{x_i \mapsto u_i}_i = u_j
    \qquad
    \tmsubst{\M(t_1, \dots t_m)}{x_i \mapsto u_i}_i
      = \M(\tmsubst{t_1}{x_i \mapsto u_i}_i, \dots,
           \tmsubst{t_m}{x_i \mapsto u_i}_i)
    \\
    \tmsubst{\oper o((\vec{y_1}.\,t_1), \dots,
      \oper o(\vec{y_k}.\,t_k))}
      {x_i \mapsto u_i}_i
    =
    \oper o((\vec{y_1}.\,\tmsubst{t_1}{x_i \mapsto u_i}_i), \dots,
      (\vec{y_k}.\,\tmsubst{t_k}{x_i \mapsto u_i}_i))
  \end{gather*}
  (On the right-hand side of the definition on operators, the
  terms $t_i$ are weakened, and we omit from the substitution variables
  that are mapped to themselves.)
  If instead we have terms
  $
    (\Psi~|~\Gamma, \vec x_i : \Delta_i \vdash u_i : \A_i)_i
  $
  and
  $
    \M_1 : (\Delta_1; \A_1), \dots, \M_n : (\Delta_n; \A_n)~|~\Gamma'
      \vdash t : \B
  $
  then their \emph{metasubstitution}
  $
    \Psi~|~\Gamma, \Gamma' \vdash
      \metasubst{t}{\M_i \mapsto (\vec {x_i}.\,u_i)}_i : \B
  $
  is defined using ordinary substitution by recursion on $t$:
  \begin{gather*}
    \metasubst{x}{\M_i \mapsto (\vec {x_i}.\,u_i)}_i = x
    \qquad
    \metasubst{\M_j(t_1, \dots, t_m)}
      {\M_i {\mapsto} (\vec {x_i}.\,u_i)}_i
    = \metasubst{u_j}{x_{jk} \mapsto
        \metasubst{t_k}{\M_i {\mapsto} (\vec {x_i}.\,u_i)}_i}_k
    \\
    \metasubst{\oper o((\vec{y_1}.\,t_1), \dots,
      (\vec{y_k}.\,t_k))}
      {\M_i \mapsto (\vec {x_i}.\,u_i)}_i
    \\\qquad
    = \oper o(
      (\vec{y_1}.\,\metasubst{t_1}{\M_i \mapsto (\vec {x_i}.\,u_i)}_i),
      \dots,
      (\vec{y_k}.\,\metasubst{t_k}{\M_i \mapsto (\vec {x_i}.\,u_i)}_i))
  \end{gather*}
\end{definition}

\subsection{Algebras}

The algebras for a presentation are the abstract clones interpreting each of the operations of the signature, subject to the axioms of the presentation. In other words, a presentation is a specification of structure, while the algebras are the realizations, or models, of that structure. For instance, in the first-order setting, the algebras for the presentation of monoids
form (set-theoretic) monoids.

\begin{definition}
  An \emph{algebra} $(\X, \sem{{-}})$ for an $S$-sorted second-order signature
  $\Sigma$ (called ``presentation clones'' in \cite{mahmoud2011second}) consists of an $S$-sorted clone $\X$ and, for each
  context $\Gamma$ and $((\Delta_1; \A_1), \dots, (\Delta_n; \A_n); \B)$-ary
  operator $\oper o$, a function
  $
    \sem{\oper o}_\Gamma
      : \prod_i X (\Gamma, \Delta_i; \A_i) \to X (\Gamma; \B)
  $
  such that, for all substitutions $\boldsymbol\sigma \in X (\Xi; \Gamma)$ and
  tuples of terms $(t_i \in X (\Gamma, \Delta_i; \A_i))_i$,
  \[
    \subst{(\sem{\oper o}_{\Gamma}(t_1, \dots, t_n))}{\boldsymbol\sigma}
      = \sem{\oper o}_{\Xi}(
          \subst{t_1}{\lift[\Delta_1] \boldsymbol\sigma}
          , \dots,
          \subst{t_n}{\lift[\Delta_n] \boldsymbol\sigma})
  \]
  A \emph{homomorphism} $f : (\X, \sem{{-}}) \to (\X', \sem{{-}}')$ of
  $\Sigma$-algebras is a homomorphism $f : \X \to \X'$ of clones such that,
  for all
  $\oper o \in \Sigma((\Delta_1; \A_1), \dots, (\Delta_n; \A_n); \B)$
  and $(t_i\in X(\Gamma,\Delta_i;\A_i))_i$,
  \[\textstyle
    f_{\Gamma; \B}(\sem{\oper o}_{\Gamma} (t_1, \dots, t_n))
    ~=~
    \sem{\oper o}'_{\Gamma}
      (f_{\Gamma,\Delta_1; \A_1} t_1,
      \dots,
      f_{\Gamma,\Delta_n; \A_n} t_n)
  \]
\end{definition}

The interpretation of operators in a $\Sigma$-algebra
$(\X, \sem{{-}})$ extends to an interpretation
$
  \sem{t}_{\Gamma} : \prod_i X (\Gamma, \Delta_i; \A_i)
    \to X (\Gamma, \Xi ; \B)
$
of each term
$
  \M_1 : (\Delta_1; \A_1), \dots, \M_n : (\Delta_n; \A_n)
  ~|~\vec x : \Xi \vdash t : \B
$
as follows (where $n$ is the length of $\Gamma$):
\begin{align*}
  \sem{x_i}_{\Gamma}(\boldsymbol\sigma) &= \var[\Gamma, \Xi]_{n + i}
  \\
  \sem{\M_i(t_1, \dots, t_m)}_{\Gamma}(\boldsymbol\sigma) &= \subst{\sigma_i}
    {\var[\Gamma, \Xi]_1
    ,\dots
    ,\var[\Gamma, \Xi]_n
    ,\sem{t_1}_{\Gamma}(\boldsymbol\sigma)
    ,\dots
    ,\sem{t_m}_{\Gamma}(\boldsymbol\sigma)}
  \\
  \sem{\oper o((\vec x_1.\,t_1), \dots, (\vec x_m.\,t_m))}_{\Gamma}(\boldsymbol\sigma)
    &= \sem{\oper o}_{\Gamma, \Xi}
      (\sem{t_1}_{\Gamma}(\boldsymbol\sigma), \dots, \sem{t_m}_{\Gamma}(\boldsymbol\sigma))
\end{align*}

\begin{definition}
  \label{algebra-for-sop}
  An algebra $(\X, \sem{{-}})$ for a second-order
  presentation $\SIGMA = (\Sigma, E)$ is a $\Sigma$-algebra such that,
  for all equations $(t, u) \in E(\Psi; \A)$ and contexts $\Gamma$, we
  have $\sem{t}_{\Gamma} = \sem{u}_{\Gamma}$.
  We let $\SIGMA\Alg$ be the category of $\SIGMA$-algebras and
  all $\Sigma$-algebra homomorphisms between them.
\end{definition}

\begin{example}\label{example:stlc-algebras}
  An algebra for the presentation $\STLCBetaEta$ of the STLC with
  $\beta\eta$-equality consists of a $\Ty$-sorted clone $\X$ and
  functions
  \[
    \sem{\oper{app}}_{\Gamma}
      : X(\Gamma; A \To B) \times X(\Gamma; A)
        \to X(\Gamma; B)
    \qquad
    \sem{\oper{abs}}_{\Gamma}
      : X(\Gamma, A; B)
        \to X(\Gamma; A \To B)
  \]
  that commute with substitution and satisfy
  \begin{align*}
    \sem{\oper{app}}_{\Gamma} (\sem{\oper{abs}}_{\Gamma} (t),\,t')
    &=
    \subst{t}{\var[\Gamma], t'}
    &&\text{for}~t \in X (\Gamma, A; B), t' \in X (\Gamma; A)
      \tag{$\beta$}\label{eq:algebra-beta}
    \\
    \sem{\oper{abs}}_{\Gamma}
      (\sem{\oper{app}}_{\Gamma, A} (\subst{t}{\rename \wk[\Gamma]_{A}}, \var[\Gamma, A]_{n + 1}))
    &=
    t
    &&\text{for}~t \in X (\Gamma; A{\To}B)
      \tag{$\eta$}
  \end{align*}
  For each set $Z$ we have a set-theoretic interpretation of the STLC,
  which forms a $\STLCBetaEta$-algebra $(\Model_Z, \Model_Z\sem{{-}})$
  as follows.
  Define interpretations $\Model_Z\sem{A} \in \Set$ of each sort
  $A \in \Ty$ recursively by setting $\Model_Z\sem{\basety} = Z$ and
  $\Model_Z\sem{A \To B} = \Set(\Model_Z\sem{A}, \Model_Z\sem{B})$
  (where $\Set(Y, Y')$ is the set of functions $Y \to Y'$).
  We then have a $\Ty$-sorted clone $\Model_Z$, where the sets of terms
  are given by
  $
    \Model_Z(A_1, \ldots, A_n; B)
      = \Set(\prod_i \Model_Z\sem{A_i},\Model_Z\sem{B})
  $,
  the variables by projections $\var[\Gamma]_i = \pi_i$, and
  substitution by
  $f[\boldsymbol\sigma] = (\xi \mapsto f (\sigma_1(\xi), \dots, \sigma_n(\xi)))$.
  This forms a $\STLCBetaEta$-algebra, with interpretations of the
  operators given by function application and currying.
  More generally, the interpretation of the STLC in any cartesian-closed
  category $\catname C$ with a specified object $Z \in \catname C$ forms a
  $\STLCBetaEta$-algebra taking
  $
    \Model_Z(A_1, \dots, A_n; B)
      = \catname C(\prod_i \Model_Z\sem{A_i}, \Model_Z\sem{B})
  $ to be the sets of terms, where $\Model_Z\sem{\basety} = Z$ and
  $\Model_Z\sem{A \To B} = {\Model_Z\sem{B}}^{\Model_Z\sem{A}}$.
\end{example}

The cartesian structure of $\Clone S$ lifts to $\SIGMA\Alg$ for every
presentation $\SIGMA$: the clone $\clonename 1$ uniquely forms a
$\SIGMA$-algebra, and the product
$(\X_1, \sem{{-}}_1) \times (\X_2, \sem{{-}}_2)$ is the clone
$\X_1 \times \X_2$ equipped with interpretations
$
  \sem{\oper o}_{\Gamma}((\sigma_{11}, \sigma_{21}), \dots, (\sigma_{1n}, \sigma_{2n}))
    = (\sem{\oper o}_{1,\Gamma}(\boldsymbol\sigma_1)
      ,\sem{\oper o}_{2,\Gamma}(\boldsymbol\sigma_2))
$.

\section{Free algebras}
\label{sec:free-algebras}
Second-order $S$-sorted presentations $\SIGMA$ can be viewed as
descriptions of simple type theories for which $S$ is the set of types.
In particular, the operators specify the term formers of the type theory (such as
$\lambda$-abstraction, or application).
From this perspective, the syntax of the type theory described by $\SIGMA$
is the \emph{initial} $\SIGMA$-algebra: there is a unique $\Sigma$-algebra homomorphism from the algebra formed by the syntax to any other algebra, given by induction on terms.
More generally, given the syntax of an existing theory in the form of a clone $\X$, the
\emph{free} $\SIGMA$-algebra on $\X$ is given by
augmenting $\X$ by the operators and equations of $\SIGMA$; or, from another perspective, augmenting the type theory described by $\SIGMA$ with the operations specified by $\X$.
For example, the free $\STLCBetaEta$-algebra on $\GSClone V$ (\Cref{example:global-state}) may be seen as the STLC extended by additional term formers ($\oper{get}$ and $\oper{put}_{v_1}, \dots, \oper{put}_{v_k}$) representing the side-effects of global state.

\begin{definition}\label{def:free-algebra}
  Suppose $\SIGMA = (\Sigma, E)$ is an $S$-sorted second-order presentation and $\X$
  is an $S$-sorted clone.
  A $\SIGMA$-algebra $F_{\SIGMA}\X$ equipped with a clone homomorphism
  $\eta_{\X} : \X \to F_{\SIGMA} \X$ is the \emph{free $\SIGMA$-algebra}
  on $\X$ if, for any other $\SIGMA$-algebra $(\Y, \sem{{-}})$ and
  clone homomorphism $f : \X \to \Y$, there is a unique
  $\Sigma$-algebra homomorphism
  $\extend f : F_\SIGMA \X \to (\Y, \sem{{-}})$ such that
  $\extend f \compose \eta_{\X} = f$.
  The \emph{initial $\SIGMA$-algebra} is the free $\SIGMA$-algebra on
  $\Var S$.
\end{definition}

\begin{example}\label{example:stlc-initial}
  Recall the presentation $\STLCBetaEta$ of the STLC with
  $\beta\eta$-equality from \cref{example:stlcpres}.
  The initial $\STLCBetaEta$-algebra is the clone $\STLCBetaEtaClone$
  of STLC terms up to $\tmeq_{\beta\eta}$ (\cref{example:stlcclone}),
  with the operators $\oper{app}$ and $\oper{abs}$ interpreted as
  \begin{align*}
    ((f, a) \mapsto \sapp\,f\,a) &:
      \STLCBetaEtaTerms(\Gamma; A \To B)
        \times \STLCBetaEtaTerms(\Gamma; A)
        \to \STLCBetaEtaTerms(\Gamma; B)
    \\
    (t \mapsto \lambda x : A.\,t) &:
      \STLCBetaEtaTerms(\Gamma, A; B)
        \to \STLCBetaEtaTerms(\Gamma; A \To B)
  \end{align*}

  The free $\STLCBetaEta$-algebra on the clone $\GSClone V$ of global
  $V$-valued state (\cref{example:global-state}) can be described as
  follows for $V = \{v_1, \dots, v_k\}$.
  The underlying $\Ty$-sorted clone is defined in the same way as
  $\STLCBetaEtaClone$, but with the following additional term formers and equations (omitting the typing constraints on equations).
  \[
    \begin{array}{l}
    \begin{prooftree}[center=false]
      \hypo{\Gamma \vdash t_1 : \basety}
      \hypo{\cdots}
      \hypo{\Gamma \vdash t_k : \basety}
      \infer3{\Gamma \vdash \sget(t_1, \dots, t_k) : \basety}
    \end{prooftree}
    \\[1ex]
    \begin{prooftree}[center=false]
      \hypo{\Gamma \vdash t : \basety}
      \infer1[\ \textnormal{($i \leq k$)}]{\Gamma \vdash \sput_{v_i}(t) : \basety}
    \end{prooftree}
    \end{array}
    \qquad
    \begin{array}{r@{~\tmeq_{\beta\eta}~}lr}
        \sget(\sput_{v_1}(t), \dots, \sput_{v_k}(t))
        &t
        \\[0.5ex]
        \sput_{v_i} (\sget(t_1, \dots, t_k))
        &\sput_{v_i}(t_i)
        &(i \leq k)
        \\[0.5ex]
        \sput_{v_i} (\sput_{v_j}(t))
        &\sput_{v_j}(t)
        &(i,k \leq k)
    \end{array}
  \]
  This forms a $\STLCBetaEta$-algebra in the same way as
  $\STLCBetaEtaClone$ above.
  The morphism $\eta_{\GSClone V}$ is given by
  $
    \eta_{\GSClone V} (\oper{get} (t_1, \dots, t_k))
      = \syntax{get} (\eta_{\GSClone V}(t_1), \dots, \eta_{\GSClone V}(t_k))
  $
  and
  $
    \eta_{\GSClone V} (\oper{put}_{v_i}(t))
      = \syntax{put}_{v_i} (\eta_{\GSClone V}(t))
  $.
\end{example}

If $\SIGMA'$ is a first-order presentation, the free $\SIGMA$-algebra on $\TmClone[\SIGMA']$ is closed under the operators of $\SIGMA'$:
each $\oper o \in \SIGMA'(\A_1, \dots, \A_n; \B)$ induces a term
$\eta(\oper o(x_1,\dots, x_n)) \in F_{\SIGMA} \X(\A_1, \dots, \A_n; \B)$
and hence functions
$
  (\boldsymbol\sigma~\mapsto~\subst{\eta(\oper o(\vec x))}{\boldsymbol\sigma})
  \,:\,
  F_{\SIGMA} \X(\Gamma; \A_1, \dots, \A_n) \to F_{\SIGMA} \X(\Gamma; \B)
$.

We show that free algebras for any signature, and on any clone, exist, by constructing them explicitly.
Existence of these free algebras facilitates the developments in the
next sections.
However, note that we do not rely on the explicit description:
after this section, we reason about free algebras solely using the universal property in \Cref{def:free-algebra}. This is important, as we wish to reason about type theories independently of their syntax, which leads to greatly simplified proofs. (It is also possible to prove the existence of free algebras entirely
abstractly using a monadicity theorem and \Cref{clones-form-a-variety}, avoiding concrete syntax.)

\begin{figure}[t]
  \boxed{\text{Terms}}
  \begin{gather*}
    \SwapAboveDisplaySkip
    \begin{prooftree}[center=false]
      \infer0{\Gamma, x : \A, \Delta \vdash_{\X} x : \A}
    \end{prooftree}
    \qquad
    \begin{prooftree}[center=false]
      \hypo{\Gamma \vdash_{\X} t_1 : \A_1}
      \hypo{\cdots}
      \hypo{\Gamma \vdash_{\X} t_n : \A_n}
      \infer3[\ \textnormal{($f \in X(\A_1,\dots,\A_n;\B)$)}]
        {\Gamma \vdash_{\X} \incfree{f}(t_1, \dots, t_n) : \B}
    \end{prooftree}
    \\[1ex]
    \begin{prooftree}[center=false]
      \hypo{\Gamma, \vec x_1 : \Delta_1 \vdash_{\X} t_1 : \A_1}
      \hypo{\mspace{-1mu}\cdots\mspace{-1mu}}
      \hypo{\Gamma, \vec x_n : \Delta_n \vdash_{\X} t_n : \A_n}
      \infer3
        [\ \textnormal{($\oper o \in \Sigma((\Delta_1; \A_1), \dots, (\Delta_n; \A_n); \B)$)}]
        {\Gamma \vdash_{\X}
          \oper o((\vec x_1.\,t_1), \dots, (\vec x_n.\,t_n)) : \B}
    \end{prooftree}
  \end{gather*}
  \boxed{
    \text{Equations (reflexivity, symmetry, transitivity omitted)}}
  \begin{gather*}
    \begin{prooftree}[center=false]
      \hypo{\Gamma \vdash_{\X} t_1 \tmeq u_1 : \A_1}
      \hypo{\cdots}
      \hypo{\Gamma \vdash_{\X} t_n \tmeq u_n : \A_n}
      \infer3[\ \textnormal{($f \in X(\A_1, \dots, \A_n; \B)$)}]
        {\Gamma \vdash_{\X} \incfree f (t_1, \dots, t_n)
          \tmeq \incfree f (u_1, \dots, u_n) : \B}
    \end{prooftree}
    \\
    \begin{prooftree}[center=false]
      \hypo{\Gamma, \vec x_1 : \Delta_1 \vdash_{\X} t_1 \tmeq u_1 : \A_1}
      \hypo{\mspace{-1mu}\cdots\mspace{-1mu}}
      \hypo{\Gamma, \vec x_n : \Delta_n \vdash_{\X} t_n \tmeq u_n : \A_n}
      \infer3
        [\textnormal{($\oper o \in \Sigma ((\Delta_1;\A_1), \dots; \B)$)}]
        {\Gamma \vdash_{\X}
          \oper o((\vec x_1.\,t_1), \dots, (\vec x_n.\,t_n))
          \tmeq \oper o((\vec x_1.\,u_1), \dots, (\vec x_n.\,u_n)) : \B}
    \end{prooftree}
    \\[2ex]
    \begin{prooftree}[center=false]
      \hypo{\Gamma, \vec x_1{:}\Delta_1 \vdash_{\X} t_1 \tmeq u_1:\A_1}
      \hypo{\mspace{-12mu}\cdots\mspace{-12mu}}
      \hypo{\Gamma, \vec x_n{:}\Delta_n \vdash_{\X} t_n \tmeq u_n:\A_n}
      \infer3
        [\textnormal{($(t',u') \in E((\Delta_1;\A_1), \dots; \B)$)}]
        {\Gamma \vdash_{\X}
          \metasubst{t'}{\M_i \mapsto (\vec x_i.\,t_i)}_i
          \tmeq
          \metasubst{u'}{\M_i \mapsto (\vec x_i.\,u_i)}_i : \B}
    \end{prooftree}
    \\[2ex]
    \begin{prooftree}[center=false]
      \hypo{\Gamma \vdash_{\X} t_1 : \A_1}
      \hypo{\cdots}
      \hypo{\Gamma \vdash_{\X} t_n : \A_n}
      \infer3[\textnormal{($i \leq n$)}]{\Gamma \vdash_{\X}
        \incfree{t_i \tmeq \var[\A_1,\dots,\A_n]_i}(t_1,\dots,t_n) : \B}
    \end{prooftree}
    \\
    \begin{prooftree}[center=false]
      \hypo{\Gamma \vdash_{\X} t_1 : \A_1}
      \hypo{\cdots}
      \hypo{\Gamma \vdash_{\X} t_n : \A_n}
      \infer3
        {\Gamma \vdash_{\X}
          \incfree{f}
            (\incfree{\sigma_1}(t_1, \dots, t_n)
            , \dots
            , \incfree{\sigma_{k}}(t_1, \dots, t_n))
          \tmeq
          (\incfree{\subst{f}{\boldsymbol\sigma}})(t_1, \dots, t_n) : \B}
    \end{prooftree}
    \tag{$f \in X(\A'_1, \dots, \A'_k; \B),
      \boldsymbol\sigma \in X(\A_1, \dots, \A_n; \A_1', \dots, \A'_k)$}
  \end{gather*}
  \caption{%
    Construction of the free $(\Sigma, E)$-algebra on a clone
      $\X = (X, \var, \substfunction)$.}
  \label{free-algebra-construction}
\end{figure}

In universal algebra, free algebras of first-order presentations are
constructed in two steps: by first closing a sort-indexed set $X$ of constants
under the operators of the presentation; and then quotienting the terms by the equations of the
presentation.
\Cref{free-algebra-construction} gives the analogous construction in the second-order setting.
First, we construct terms $\Gamma \vdash_{\X} t : \B$ from variables, the
terms of the clone $f \in X(\A_1, \dots, \A_n; \B)$ (viewed as function symbols), and the operators of the presentation $\SIGMA$.
Second, we quotient by the equivalence relation $\tmeq$ generated by
congruence, the equations of $\SIGMA$ (using metasubstitution), and
rules imposing compatibility with the clone structure of $\X$.
The clone $F_{\SIGMA} \X$ has terms
$
  F_{\SIGMA} \X(\Gamma; \B)
    = \{\Gamma \vdash_{\X} t : \B\}/\tmeq
$, with variables and substitution defined in the evident way;
the homomorphism $\eta_{\X} : \X \to F_{\SIGMA} \X$ sends $t \in X(\Gamma; \B)$ to
$x_1 : A_1, \ldots, x_n : A_n \vdash_{\X} t(x_1, \dots, x_n) : \B$, where $\Gamma = [\A_1, \ldots, \A_n]$.

\begin{proposition}
  For every $S$-sorted second-order presentation $\SIGMA$ and $S$-sorted
  clone $\X$, the free $\SIGMA$-algebra $F_\SIGMA\X$ exists.
\end{proposition}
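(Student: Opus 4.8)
The plan is to show that the clone $F_\SIGMA\X$ built in \Cref{free-algebra-construction}, paired with $\eta_\X$, has the universal property of \Cref{def:free-algebra}. The preliminary, routine part is to confirm that the construction yields the claimed structure: that $\tmeq$ is a congruence making $F_\SIGMA\X(\Gamma;\B) = \{\Gamma \vdash_\X t : \B\}/{\tmeq}$ an $S$-sorted clone (the three clone laws of \Cref{def:clone} hold because the last two generating equation rules of \Cref{free-algebra-construction} force exactly the variable and substitution axioms); that $\sem{\oper o}_\Gamma([t_1],\dots,[t_n]) = [\oper o((\vec x_1.\,t_1),\dots,(\vec x_n.\,t_n))]$ is well defined and commutes with substitution, so that $F_\SIGMA\X$ is a $\SIGMA$-algebra in the sense of \Cref{algebra-for-sop}; and that $\eta_\X$ is a clone homomorphism. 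Each of these follows directly from the corresponding generating rule. The substance of the proof is the universal property, which I treat next.

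Given any $\SIGMA$-algebra $(\Y,\sem{-})$ and clone homomorphism $f : \X \to \Y$, I would first define $\extend f$ on \emph{raw} terms (before quotienting) by recursion on their derivations: a variable $\var[\Gamma]_i$ is sent to $\var[\Gamma]_i$ in $\Y$; a function-symbol term $g(t_1,\dots,t_n)$, with $g \in X(\A_1,\dots,\A_n;\B)$, is sent to $\subst{(f(g))}{\extend f(t_1),\dots,\extend f(t_n)}$ using substitution in $\Y$; and an operator term $\oper o((\vec x_1.\,t_1),\dots,(\vec x_n.\,t_n))$ is sent to $\sem{\oper o}_\Gamma(\extend f(t_1),\dots,\extend f(t_n))$ using the algebra structure of $\Y$. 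Every clause is forced: a $\Sigma$-algebra homomorphism must preserve variables and the interpretation of operators, and the requirement $\extend f \compose \eta_\X = f$ together with preservation of substitution pins down the value on function symbols. Consequently, if $\extend f$ descends to a homomorphism on the quotient at all, it is the unique one, which settles uniqueness.

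The heart of the argument is to show $\extend f$ respects $\tmeq$, hence descends to $F_\SIGMA\X$. I would prove, on raw terms and by induction, three substitution lemmas. The first two are that $\extend f$ commutes with renaming and then with the term-level substitution of the construction, i.e.\ $\extend f(\subst t{\bsigma}) = \subst{\extend f(t)}{\extend f(\bsigma)}$; the operator case here uses the coherence law relating $\sem{\oper o}$ to substitution in $\Y$ and the fact that $\extend f$ commutes with the lifting $\lift[\Delta_i]$. The third, and genuinely second-order, lemma is that $\extend f$ applied to a metasubstitution computes the term interpretation, namely $\extend f(\metasubst{t'}{\M_i \mapsto (\vec x_i.\,t_i)}_i) = \sem{t'}_\Gamma(\extend f(t_1),\dots,\extend f(t_n))$, proved by induction on $t'$ with the variable context allowed to vary so the induction passes under binders. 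I expect this lemma to be the main obstacle: its metavariable case unfolds metasubstitution into an ordinary substitution into the replacement term, so it relies on the second lemma, and one must carefully match the weakenings introduced by the definition of metasubstitution under operator binders against the clauses defining $\sem{-}$ on metavariables and operators. This is precisely the semantic substitution lemma underlying soundness of second-order equational logic.

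Granting these lemmas, $\extend f$ respects each generator of $\tmeq$: the congruence rules hold because $\extend f$ is defined recursively on subterms; the two clone-compatibility rules reduce, after unfolding the function-symbol clause, to $f$ preserving variables and substitution (\Cref{def:clone}) together with associativity of substitution in $\Y$; and for an equation $(t',u') \in E$ the metasubstitution lemma rewrites both sides as $\sem{t'}_\Gamma$ and $\sem{u'}_\Gamma$ applied to the (equal, by induction) images of the arguments, which agree because $\Y$ is a $\SIGMA$-algebra and hence $\sem{t'}_\Gamma = \sem{u'}_\Gamma$ by \Cref{algebra-for-sop}. Once $\extend f$ descends to the quotient, its three defining clauses are literally the assertions that it preserves variables, substitution, and operators, so it is a $\Sigma$-algebra homomorphism; and the function-symbol clause specialized to $g(x_1,\dots,x_n) = \eta_\X(g)$ gives $\extend f \compose \eta_\X = f$. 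With uniqueness already established, this exhibits $F_\SIGMA\X$ as the free $\SIGMA$-algebra on $\X$, completing the proof.
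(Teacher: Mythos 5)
Your proposal is correct and takes essentially the same route as the paper: the paper merely presents the construction of \Cref{free-algebra-construction} and asserts the proposition, and your argument supplies precisely the verification it leaves implicit (defining $\extend f$ on raw terms, proving the substitution and metasubstitution/semantic-interpretation lemmas so that $\extend f$ respects $\tmeq$, and deducing uniqueness from the homomorphism conditions). One small correction: the three clone laws of $F_{\SIGMA}\X$ hold for syntactic substitution on raw terms by structural induction, independently of the quotient; the last two equation rules of \Cref{free-algebra-construction} do not impose those laws but rather identify the clone structure of $\X$ with its image as function symbols, which is what makes $\eta_{\X}$ a clone homomorphism.
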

The forgetful functor $\SIGMA\Alg \to \Clone S$ therefore
has a left adjoint (in fact, it is monadic).

\section{Induction over second-order syntax}
\label{sec:induction}

We now describe how the formalism of abstract clones may be used to prove properties of simple type theories. To begin, we consider predicates over abstract clones, which are predicates over the terms of the type theory induced by the clone, closed under the structural operations of variable projection and substitution.
Below, we extend each family of subsets
$P(\Gamma; \A) \subseteq Y(\Gamma; \A)$
to contexts by defining $P(\Gamma; \A_1,\dots,\A_n)$ to be the set of all
substitutions $\boldsymbol\sigma \in Y(\Gamma; \A_1,\dots,\A_n)$ such that
$\sigma_i \in P(\Gamma; \A_i)$ for all $i \leq n$.

\begin{definition}
  A \emph{predicate} $P$ over an $S$-sorted clone $\X$ consists of a
  subset $P(\Gamma; \A) \subseteq X(\Gamma; \A)$ for each
  $(\Gamma; \A) \in S^* \times S$ such that, for all contexts $\Gamma = [\A_1, \ldots, \A_n]$ and $i \leq n$, we have
  $\var[\Gamma]_i \in P(\Gamma; \A_i)$, and, for all
  $t \in P(\Delta; \B)$ and $\boldsymbol\sigma \in P(\Gamma; \Delta)$, we have
  $\subst{t}{\boldsymbol\sigma} \in P(\Gamma; \B)$.
\end{definition}
Closure under variables and under substitution imply that $P$ forms a
clone $\PClone$ whose inclusion $\PClone \hookrightarrow \X$ into $\X$ is
a clone homomorphism.
Predicates over $S$-sorted clones are equivalently the subobjects in
$\Clone S$, and are hence closed under arbitrary conjunction, existential quantification, and quotients of equivalence relations.
(This follows from \Cref{clones-form-a-variety}, since varieties are exact categories \cite[Theorem~5.11]{barr1971exact}, and all exact categories enjoy these properties.)
They are also closed under context extension: if $P$ is a predicate over
$\X$ and $\Xi$ is a context, then $\bind{\Xi} P$ is a
predicate over $\bind{\Xi} \X$.

We present a meta-theorem for establishing properties of simple type theories.
\begin{theorem}[Induction principle for second-order syntax]
  \label{thm:induction}
  Suppose that $(\Y, \sem{{-}})$ is an algebra for an
  $S$-sorted second-order presentation $\SIGMA$, that $f : \X \to \Y$ is
  a clone homomorphism from an $S$-sorted clone $\X$, and that $P$ is a
  predicate over $\Y$.
  If
  \begin{itemize}
    \item for all operators
      $\oper o \in \Sigma((\Delta_1;\A_1), \dots, (\Delta_n;\A_n); \B)$,
      contexts $\Gamma \in S^*$, and tuples of terms
      $(t_i \in P(\Gamma, \Delta_i; \A_i))_i$ we have
      $\sem{\oper o}_\Gamma (t_1, \dots, t_n) \in P(\Gamma; \B)$;
    \item for all terms $t \in X(\Gamma; \A)$ we have
      $f_{\Gamma; \A}(t) \in P(\Gamma; \A)$,
  \end{itemize}
  then, for all free terms $t \in (F_{\SIGMA} \X) (\Gamma; \A)$, we have
  $\extend f_{\Gamma; \A}(t) \in P(\Gamma; \A)$.
\end{theorem}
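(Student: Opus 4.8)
The plan is to observe that the two bulleted hypotheses say precisely that the subclone $\PClone \hookrightarrow \Y$ (which exists, as noted just before the theorem) is in fact a $\SIGMA$-subalgebra of $\Y$, and that $f$ factors through it. The conclusion then follows immediately from the universal property of the free algebra in \Cref{def:free-algebra}, with no explicit induction on the structure of free terms required, despite the name.

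First I would upgrade $\PClone$ to a $\SIGMA$-algebra. Write $\iota : \PClone \hookrightarrow \Y$ for the inclusion, which is an injective clone homomorphism. The first hypothesis says exactly that each interpretation $\sem{\oper o}_\Gamma$ restricts along $\iota$ to a function $\prod_i P(\Gamma, \Delta_i; \A_i) \to P(\Gamma; \B)$, since its value lies in $P$ whenever its arguments do. These restricted operations equip $\PClone$ with structure maps $\sem{{-}}^{P}$, and by construction $\iota$ commutes with them. The substitution-compatibility axiom and the equations of $E$ then hold in $\PClone$ because they hold in $\Y$ and $\iota$ is injective: $\iota$ preserves the clone substitution and the operations, so applying it to either side of a required identity yields the corresponding (equal) expression in $\Y$, and injectivity pulls the equality back to $\PClone$. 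Here one uses that $\PClone$ is closed under substitution and contains all variables, so the lifted substitutions $\lift[\Delta_i]\bsigma$ and all intermediate substitution instances remain in $P$, making each expression well-defined in $\PClone$. Thus $(\PClone, \sem{{-}}^{P})$ is a $\SIGMA$-algebra and $\iota$ is a $\Sigma$-algebra homomorphism.

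Next, the second hypothesis says that $f_{\Gamma;\A}(t) \in P(\Gamma;\A)$ for every $t \in X(\Gamma;\A)$, so $f$ corestricts to a clone homomorphism $f' : \X \to \PClone$ with $\iota \compose f' = f$. Applying the universal property of the free algebra to $f'$ yields a $\Sigma$-algebra homomorphism $\extend{f'} : F_\SIGMA \X \to (\PClone, \sem{{-}}^{P})$ with $\extend{f'} \compose \eta_{\X} = f'$.

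Finally I would identify $\extend f$ with $\iota \compose \extend{f'}$. The composite $\iota \compose \extend{f'} : F_\SIGMA \X \to (\Y, \sem{{-}})$ is again a $\Sigma$-algebra homomorphism, and $(\iota \compose \extend{f'}) \compose \eta_{\X} = \iota \compose f' = f$; by the uniqueness clause of \Cref{def:free-algebra} it must equal $\extend f$. Hence for every $t \in (F_\SIGMA \X)(\Gamma;\A)$ we have $\extend{f}_{\Gamma;\A}(t) = \iota_{\Gamma;\A}(\extend{f'}_{\Gamma;\A}(t)) = \extend{f'}_{\Gamma;\A}(t)$, which lies in $P(\Gamma;\A)$ precisely because $\extend{f'}$ takes values in $\PClone$. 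I expect the only real work to be the first step, checking that the restricted operations genuinely define a $\SIGMA$-algebra; but this is essentially forced by injectivity of $\iota$ and the fact that $\Y$ already satisfies all the required identities, so the argument reduces the "induction principle" to a clean application of freeness.
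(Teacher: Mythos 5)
Your proposal is correct and follows essentially the same route as the paper's proof: equip the subclone $\PClone$ with the restricted operator interpretations to make it a $\SIGMA$-algebra, corestrict $f$ to $\X \to \PClone$, apply the universal property of $F_\SIGMA\X$, and use uniqueness to identify the resulting homomorphism (composed with the inclusion) with $\extend f$. You simply spell out in more detail the steps the paper leaves implicit, namely why the equations of $\SIGMA$ hold in $\PClone$ and why the uniqueness clause forces $\extend f = \iota \compose \extend{f'}$.
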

\begin{proof}
  The predicate $P$ is closed under operators, so the interpretations of
  operators in $\Y$ make $\PClone$ into a $\SIGMA$-algebra.
  The image of $f$ is contained in $P$, so $f$ forms a clone
  homomorphism $\X \to \PClone$.
  By the universal property of the free algebra $F_{\SIGMA} \X$, we
  therefore have an algebra homomorphism $F_{\SIGMA} \X \to \PClone$.
  This necessarily sends
  $t \in (F_{\SIGMA}\X)(\Gamma; \A)$ to
  $\extend f_{\Gamma; \A}(t) \in P(\Gamma;\A)$.
\end{proof}

We give two corollaries of this induction principle.
The first is for proving properties of closed terms, which take the form
of families of subsets $P (\A) \subseteq Y(\ec; \A)$.
Given such a family $P$, let $P(\A_1,\dots, \A_n)$ be the set of all
$\boldsymbol\sigma \in Y(\ec; \A_1,\dots,\A_n)$ such that $\sigma_i \in
P(\A_i)$ for all
$i \leq n$, and define a predicate $\substpred P$ over $\Y$ by
$
  \substpred P (\Gamma; \A) = \{ t \in Y(\Gamma; \A) \mid
    \forall \boldsymbol\sigma \in P(\Gamma).\,\subst{t}{\boldsymbol\sigma} \in P (\A)
  \}
$.
Applying the induction principle above to $\substpred P$ gives us the
following.
\begin{corollary}\label{corollary:closed-terms}
  Suppose that $\SIGMA$ is an $S$-sorted second-order presentation,
  that $(\Y, \sem{{-}})$ is a $\SIGMA$-algebra, and that
  $(P (\A) \subseteq Y (\ec; \A))_{\A \in S}$ is a family of subsets.
  For every $S$-sorted clone $\X$ and clone homomorphism $f : \X \to \Y$, if
  \begin{itemize}
    \item for every operator
      $\oper o \in \Sigma ((\Delta_1;\A_1),\dots, (\Delta_n;\A_n); \B)$
      and tuple
      $
        (t_i \in \substpred P (\Delta_i;\A_i))_{i \leq n}
      $
      of terms, we have
      $\sem{\oper o}_{\ec}(t_1, \dots, t_n) \in P (\B)$;
    \item for every term $t \in X (\Gamma; \B)$, we have
      $f_{\Gamma; \B}(t) \in \substpred P (\Gamma; \B)$,
  \end{itemize}
  then, for every type $\A \in S$ and free term $t \in (F_{\SIGMA} \X) (\ec; \A)$, we
  have $\extend f_{\ec; \A}(t) \in P (\A)$.
\end{corollary}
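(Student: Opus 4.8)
The plan is to obtain this as an instance of the induction principle (\Cref{thm:induction}) applied to the predicate $\substpred P$. Three things must be done: confirm that $\substpred P$ is genuinely a predicate over $\Y$, verify the two hypotheses of \Cref{thm:induction} for it, and finally specialize the resulting conclusion from an arbitrary context to the empty context $\ec$. The second bullet of the corollary is, word for word, the second hypothesis of \Cref{thm:induction}, so no work is needed there; all the substance lies in deducing closure of $\substpred P$ under interpreted operators in \emph{open} contexts from the corollary's first bullet, which speaks only about the closed interpretation $\sem{\oper o}_\ec$.

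First I would check that $\substpred P$ is a predicate. Closure under variables is immediate: $\subst{\var[\Gamma]_i}{\bsigma} = \sigma_i \in P(\A_i)$ whenever $\bsigma \in P(\Gamma)$. Closure under substitution follows from associativity: if $t \in \substpred P(\Delta;\B)$ and $\boldsymbol\mu \in \substpred P(\Gamma;\Delta)$, then for any $\bsigma \in P(\Gamma)$ we have $\subst{(\subst{t}{\boldsymbol\mu})}{\bsigma} = \subst{t}{\boldsymbol\mu \compose \bsigma}$, and each component $\subst{\mu_j}{\bsigma}$ of $\boldsymbol\mu \compose \bsigma$ lies in $P$, so $\boldsymbol\mu \compose \bsigma \in P(\Delta)$ and the result lies in $P(\B)$.

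For the first hypothesis of \Cref{thm:induction}, fix $\oper o$ of arity $((\Delta_1;\A_1),\dots,(\Delta_n;\A_n);\B)$, terms $t_i \in \substpred P(\Gamma,\Delta_i;\A_i)$, and a substitution $\bsigma \in P(\Gamma)$. The algebra axiom rewrites $\subst{(\sem{\oper o}_\Gamma(t_1,\dots,t_n))}{\bsigma}$ as $\sem{\oper o}_\ec(\subst{t_1}{\lift[\Delta_1]\bsigma},\dots,\subst{t_n}{\lift[\Delta_n]\bsigma})$, so by the corollary's first bullet it suffices to show $\subst{t_i}{\lift[\Delta_i]\bsigma} \in \substpred P(\Delta_i;\A_i)$ for each $i$, i.e. that substituting an arbitrary $\boldsymbol\tau \in P(\Delta_i)$ into it lands in $P(\A_i)$. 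This is the main obstacle, and it reduces to the substitution identity $\subst{(\subst{t_i}{\lift[\Delta_i]\bsigma})}{\boldsymbol\tau} = \subst{t_i}{(\bsigma,\boldsymbol\tau)}$. I would establish it by unfolding the definition of $\lift[\Delta_i]$ and using associativity together with the unit and projection laws: composing $\lift[\Delta_i]\bsigma \in Y(\Delta_i;\Gamma,\Delta_i)$ with $\boldsymbol\tau \in Y(\ec;\Delta_i)$ returns $\bsigma$ on the first $\length{\Gamma}$ components (the closed terms of $\bsigma$, merely weakened into $\Delta_i$, are unchanged by $\boldsymbol\tau$, since the relevant composite of weakening with $\boldsymbol\tau$ is the empty substitution) and returns $\boldsymbol\tau$ on the last $\length{\Delta_i}$ components (which carry the identity $\var[\Delta_i]$). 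Since $\bsigma \in P(\Gamma)$ and $\boldsymbol\tau \in P(\Delta_i)$, the concatenation $(\bsigma,\boldsymbol\tau)$ lies in $P(\Gamma,\Delta_i)$, and as $t_i \in \substpred P(\Gamma,\Delta_i;\A_i)$ we conclude $\subst{t_i}{(\bsigma,\boldsymbol\tau)} \in P(\A_i)$.

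With both hypotheses verified, \Cref{thm:induction} yields $\extend f_{\Gamma;\A}(t) \in \substpred P(\Gamma;\A)$ for every free term $t \in (F_{\SIGMA}\X)(\Gamma;\A)$. Specializing to $\Gamma = \ec$ finishes the argument: the extended set $P(\ec)$ is the singleton $\{\var[\ec]\}$ containing only the empty substitution, so membership $t \in \substpred P(\ec;\A)$ amounts, by the unit law $\subst{t}{\var[\ec]} = t$, to $t \in P(\A)$. Hence $\extend f_{\ec;\A}(t) \in P(\A)$ for every closed free term, as claimed.
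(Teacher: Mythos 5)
Your proposal is correct and takes essentially the same route as the paper's proof: both apply \cref{thm:induction} to the predicate $\substpred P$, note that closure under $f$ is exactly the corollary's second bullet, reduce closure under operators to the claim that $\subst{t_i}{\lift[\Delta_i]\boldsymbol\sigma} \in \substpred P(\Delta_i;\A_i)$ via the algebra axiom, and conclude using $\substpred P(\ec;\A) = P(\A)$. The only difference is that you spell out details the paper leaves implicit, namely that $\substpred P$ is indeed a predicate and the computation $(\lift[\Delta_i]\boldsymbol\sigma)\compose\boldsymbol\tau = (\boldsymbol\sigma,\boldsymbol\tau)$, both of which you verify correctly.
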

\begin{proof}
  $\substpred{P}(\ec; \A) = P(\A)$, so it suffices to apply
  \cref{thm:induction} to the predicate $\substpred P$.
  We therefore check the two assumptions of that theorem.
  Closure of $\substpred{P}$ under $f$ is immediate; and $\substpred P$ is closed under operators because, if
  $(t_i \in \substpred P(\Gamma, \Delta_i; \A_i))_i$ and
  $\boldsymbol\sigma \in P(\Gamma)$, then
  $
    \subst{t_i}{\lift[\Delta_i] \boldsymbol\sigma}
      \in \substpred P(\Delta_i; \A_i)
  $
  for all $i \leq n$, so that
  $
    \subst{\sem{\oper o}_{\Gamma}(t_1,  \dots, t_n)}{\boldsymbol\sigma}
    = \sem{\oper o}_{\ec}(
        \subst{t_1}{\lift[\Delta_1]\boldsymbol\sigma},
        \dots,
        \subst{t_n}{\lift[\Delta_n]\boldsymbol\sigma})
    \in P(\B)
  $.
\end{proof}
Families of subsets $P(\A) \subseteq X(\ec; \A)$ are closed under
arbitrary conjunction and disjunction, complements, and universal and
existential quantification.
They form a \emph{tripos}~\cite{hyland1980tripos,pitts2002tripos}, and
hence a model of higher-order logic over $\Clone S$; the tripos-theoretic methods of Hofmann~\cite{hofmann1999semantical} carry over in this way to the setting of abstract clones.

The second corollary is for families of subsets
$P(\Gamma; \A) \subseteq Y(\Gamma; \A)$ that are not known to be closed
under substitution.
(In some cases proving closure under substitution requires an induction
over terms, but induction over terms is what this section is meant to
enable.)
Analogously to the construction $\substpred P$ for predicates over
closed terms, we
define a predicate $\osubstpred P$ over $\Y$ by
$
  \osubstpred P (\Gamma; \A) = \{ t \in Y(\Gamma; \A) \mid
    \forall \Delta, \boldsymbol\sigma \in P(\Delta; \Gamma).\,
      \subst{t}{\boldsymbol\sigma} \in P (\Delta; \A)
  \}
$.

\begin{corollary}\label{corollary:open-terms}
  Suppose that $\SIGMA$ is an $S$-sorted second-order presentation,
  that $(\Y, \sem{{-}})$ is a $\SIGMA$-algebra, and that
  $(P (\Gamma;\A) \subseteq Y (\Gamma;\A))_{(\Gamma;\A)\in S^*\times S}$
  is a family of subsets.
  For every $S$-sorted clone $\X$ and homomorphism $f : \X \to \Y$, if
  \begin{itemize}
    \item for every context $\Gamma$ we have
      $\var[\Gamma] \in P(\Gamma; \Gamma)$;
    \item for every context $\Gamma$, operator
      $\oper o \in \Sigma ((\Delta_1;\A_1),\dots,(\Delta_n;\A_n);\B)$,
      and tuple of terms
      $
        (t_i \in \osubstpred P(\Gamma, \Delta_i; \A_i))_i
      $ we have
      $
        \sem{\oper o}_\Gamma(t_1, \dots, t_n)
          \in \osubstpred P(\Gamma; \B)
      $;
    \item for every term $t \in X (\Gamma; \B)$ we have
      $f_{\Gamma; \B}(t) \in \osubstpred P (\Gamma; \B)$,
  \end{itemize}
  then, for every free term $t \in (F_{\SIGMA} \X) (\Gamma; \A)$, we
  have $\extend f_{\Gamma; \A}(t) \in P (\Gamma; \A)$.
\end{corollary}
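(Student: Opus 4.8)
The plan is to follow the template of \cref{corollary:closed-terms}: apply the induction principle \cref{thm:induction} to the predicate $\osubstpred P$, and then transfer the resulting conclusion from $\osubstpred P$ back to $P$. First I would confirm that $\osubstpred P$ really is a predicate over $\Y$, i.e.\ that it is closed under variables and substitution. Closure under variables is automatic: for any $\boldsymbol\sigma \in P(\Delta; \Gamma)$ we have $\subst{\var[\Gamma]_i}{\boldsymbol\sigma} = \sigma_i \in P(\Delta; \A_i)$, since each component of a substitution in $P(\Delta; \Gamma)$ lies in $P$ by definition. For closure under substitution, suppose $t \in \osubstpred P(\Delta; \B)$ and $\boldsymbol\sigma \in \osubstpred P(\Gamma; \Delta)$; given any $\boldsymbol\tau \in P(\Xi; \Gamma)$, associativity of composition gives $\subst{(\subst{t}{\boldsymbol\sigma})}{\boldsymbol\tau} = \subst{t}{\boldsymbol\sigma \compose \boldsymbol\tau}$, and each component $\subst{\sigma_j}{\boldsymbol\tau}$ lies in $P(\Xi; \Delta_j)$ precisely because $\sigma_j \in \osubstpred P(\Gamma; \Delta_j)$. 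Hence $\boldsymbol\sigma \compose \boldsymbol\tau \in P(\Xi; \Delta)$, so $\subst{t}{\boldsymbol\sigma \compose \boldsymbol\tau} \in P(\Xi; \B)$, which is what is needed.

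With $\osubstpred P$ confirmed to be a predicate, the two hypotheses of \cref{thm:induction} for it are now exactly the second and third bullets of the corollary: the closure-under-operators condition of the theorem, read off at $\osubstpred P$, is verbatim the operator hypothesis, and closure of $\osubstpred P$ under $f$ is verbatim the final hypothesis. (This is cleaner than the closed-terms case, where the operator hypothesis is stated only at the empty context and must be propagated to all contexts.) Applying \cref{thm:induction} therefore immediately yields $\extend f_{\Gamma; \A}(t) \in \osubstpred P(\Gamma; \A)$ for every free term $t \in (F_{\SIGMA} \X)(\Gamma; \A)$.

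It remains to pass from $\osubstpred P$ to $P$, and this is where the first bullet enters. For any $s \in \osubstpred P(\Gamma; \A)$, I would instantiate the defining quantifier at $\Delta = \Gamma$ and $\boldsymbol\sigma = \var[\Gamma]$: the variable hypothesis guarantees $\var[\Gamma] \in P(\Gamma; \Gamma)$, and the identity law $\subst{s}{\var[\Gamma]} = s$ then gives $s \in P(\Gamma; \A)$. Thus $\osubstpred P(\Gamma; \A) \subseteq P(\Gamma; \A)$, and applying this to $s = \extend f_{\Gamma; \A}(t)$ completes the proof.

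The argument is almost entirely mechanical once $\osubstpred P$ is chosen, so there is no real obstacle; the only point demanding care is the role of the variable hypothesis. Unlike in \cref{corollary:closed-terms}, where the analogous predicate $\substpred P$ satisfies $\substpred P(\ec; \A) = P(\A)$ automatically via the empty substitution, here the inclusion $\osubstpred P \subseteq P$ is not free and depends essentially on $\var[\Gamma]$ lying in $P$; this is exactly why the open-terms corollary carries a variable assumption that the closed-terms version omits.
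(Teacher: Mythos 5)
Your proposal is correct and follows essentially the same route as the paper's own proof: apply \cref{thm:induction} to $\osubstpred P$ (whose hypotheses are verbatim the second and third bullets) and then use $\var[\Gamma] \in P(\Gamma;\Gamma)$ with the identity law $\subst{s}{\var[\Gamma]} = s$ to land back in $P$. The only difference is that you also spell out why $\osubstpred P$ is closed under variables and substitution, a check the paper leaves implicit; that verification is accurate and a worthwhile addition.
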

\begin{proof}
  We can
  apply \cref{thm:induction}
  to $\osubstpred P$ because it is closed under operators and under $f$.
  Hence $\extend f_{\Gamma; \A}(t) \in \osubstpred P (\Gamma; \A)$ for
  each $t \in (F_{\SIGMA} \X) (\Gamma; \A)$, and so
  $\var[\Gamma] \in P(\Gamma; \Gamma)$ implies that
  $
    \extend f_{\Gamma; \A}(t)
      = \subst{(\extend f_{\Gamma; \A}(t))}{\var[\Gamma]}
      \in P(\Gamma; \A)
  $.
\end{proof}

The above corollaries are designed to enable logical relations
arguments, in which the fundamental lemma is proven using an induction
hypothesis that quantifies over substitutions.
In particular, in \cref{corollary:open-terms} we require $\osubstpred P$
to be closed under the operators, rather than $P$.
There is a third corollary that instead requires closure of $P$ under operators (this would essentially be the principle of induction on
$\Gamma \vdash_{\X} t : \A$), but this is less useful for our purposes.

\section{Logical relations}
\label{sec:logical-relations}
We provide two extended examples of proofs using
the induction principles of the previous section,
both involving the presentation $\STLCBetaEta$ of the STLC with
$\beta\eta$-equality.
The first is a proof of the adequacy of the set-theoretic model of
the STLC, which uses induction on closed terms; the
second is a proof that every STLC term is $\beta\eta$-equal to one in
normal form, using induction on open terms.
Both examples are logical relations proofs, the
former using ordinary
logical relations and the latter using
Kripke relations~\cite{jung1993new}.
Though both properties are known to hold, these proofs in particular illustrate that our induction principles are powerful
enough to justify logical relations arguments. We include a proof of normalization for the STLC with global state in \Cref{normalization-global-state}, as a further motivating example.

\subsection{Closed terms and adequacy}
We say that a model $\Model$ of the STLC is \emph{adequate} when, for
all closed terms $t$ and $u$ of the base type $\basety$, if
$\Model\sem{t} = \Model\sem{u}$, then $t$ and $u$ are equal up to
$\beta\eta$-equality.
(In adequate models, equality of denotations
implies observational equivalence for terms of arbitrary types.)

We first show that we can perform logical relations arguments for the
STLC using our induction principle: specifically
\cref{corollary:closed-terms}.
Fix a $\STLCBetaEta$-algebra $(\Y, \sem{{-}})$, homomorphism
$f : \X \to \Y$ from some clone $\X$, and a subset
$P(\basety) \subseteq Y(\ec; \basety)$ of closed terms of base type.
We extend $P$ to a family of subsets $P(A) \subseteq Y(\ec; A)$ in
the standard way for logical relations:
\[
  P(A \To B) = \{t \in Y(\ec; A {\To} B) \mid
    \forall a \in P(A).\,\sem{\oper{app}}_{\ec}(t, a) \in P(B)\}
\]
Applying \cref{corollary:closed-terms} to $P$ gives us the
following:
\begin{lemma}\label{lemma:lr-closed}
  If, for every context $\Gamma$ and term $t \in X(\Gamma; \B)$, we have
  $f_{\Gamma; \B}(t) \in \substpred{P}(\Gamma; \B)$, then, for every free term
  $t \in (F_{\STLCBetaEta}\X)(\ec; \A)$, we have
  $\extend f_{\ec; A}(t) \in P(\A)$.
\end{lemma}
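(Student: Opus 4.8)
The plan is to apply \cref{corollary:closed-terms} to the family $(P(\A) \subseteq Y(\ec; \A))_{\A \in \Ty}$ determined by the given $P(\basety)$ together with the logical-relations clause at function types. The corollary's second hypothesis, that $f_{\Gamma; \B}(t) \in \substpred{P}(\Gamma; \B)$ for every $t \in X(\Gamma; \B)$, is verbatim the hypothesis of the lemma, so no work is needed there. All that remains is to check the corollary's first hypothesis: that the operators $\oper{app}$ and $\oper{abs}$ of $\STLCBetaEta$ carry $\substpred{P}$-inputs to $P$-outputs at the empty context. Throughout I will use the identity $\substpred{P}(\ec; \A) = P(\A)$, which holds because the empty substitution is the only element of $P(\ec)$ and substituting it acts as the identity.

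The case of $\oper{app}$, of arity $((\ec; A{\To}B), (\ec; A); B)$, is immediate: given $t_1 \in \substpred{P}(\ec; A{\To}B) = P(A{\To}B)$ and $t_2 \in \substpred{P}(\ec; A) = P(A)$, the membership $\sem{\oper{app}}_{\ec}(t_1, t_2) \in P(B)$ is exactly the defining clause of $P(A{\To}B)$ instantiated at $a = t_2$. The case of $\oper{abs}$, of arity $((A; B); (A{\To}B))$, is the main obstacle, and is precisely where I expect to use that $(\Y, \sem{{-}})$ is an algebra for the \emph{presentation} $\STLCBetaEta$ rather than merely for its signature. Given $t \in \substpred{P}(A; B)$, unfolding the definition of $P(A{\To}B)$ reduces the goal $\sem{\oper{abs}}_{\ec}(t) \in P(A{\To}B)$ to showing $\sem{\oper{app}}_{\ec}(\sem{\oper{abs}}_{\ec}(t), a) \in P(B)$ for each $a \in P(A)$.

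The key rewrite uses the $\beta$-equation~\eqref{eq:algebra-beta} satisfied by the algebra, which at the empty context gives $\sem{\oper{app}}_{\ec}(\sem{\oper{abs}}_{\ec}(t), a) = \subst{t}{\var[\ec], a} = \subst{t}{a}$, the leading $\var[\ec]$ collapsing because the variable context is empty. Since $t \in \substpred{P}(A; B)$ and $a \in P(A)$, the definition of $\substpred{P}$ yields $\subst{t}{a} \in P(B)$, completing the $\oper{abs}$ case. With both operators handled, \cref{corollary:closed-terms} delivers the stated conclusion. The only delicacy is the bookkeeping around the empty context in the $\beta$-rewrite; the remainder is the standard verification that $\lambda$-abstractions inhabit a logical relation.
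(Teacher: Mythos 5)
Your proof is correct and follows essentially the same route as the paper's: both apply \cref{corollary:closed-terms}, observe that only closure under operators needs checking, dispatch $\oper{app}$ directly from the definition of the logical relation, and handle $\oper{abs}$ via the \eqref{eq:algebra-beta} law to reduce to $\subst{t}{a} \in P(B)$ for $t \in \substpred{P}(A;B)$ and $a \in P(A)$. Your explicit bookkeeping of the identity $\substpred{P}(\ec;\A) = P(\A)$ and the collapse of $\var[\ec]$ is just a more detailed spelling-out of steps the paper leaves implicit.
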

\begin{proof}
  The only non-trivial assumption of \cref{corollary:closed-terms} is
  closure under operators.
  Closure under $\oper{app}$ is immediate from the definition of the
  logical relation.
  Closure under $\oper{abs}$ holds because, if
  $t \in \substpred P(A; B)$, then, for all $a \in P(A)$, we have
  $
    \sem{\oper{app}}_{\ec}(\sem{\oper{abs}}_{\ec}(t), a)
    =
    t[a] \in P(B)
  $ using the \ref{eq:algebra-beta} law, so that
  $\sem{\oper{abs}}_{\ec}(t) \in P(A \To B)$.
\end{proof}

Note that if terms are generated only by $\lambda$-abstraction and
application then there are no closed terms of base type.
For a more interesting example, we therefore consider the STLC with
booleans (where the base type $\basety$ is the type of booleans).
Consider the $\Ty$-sorted first-order presentation $\BoolPres$ with
two $(\ec; \basety)$-ary operators $\oper{true}$, $\oper{false}$, and,
for each $\A \in \Ty$, a $(\basety, \A, \A; \A)$-ary operator
$\oper{ite}$ (``if-then-else''), along with two equations:
\begin{gather*}
  y : \A, z : \A \vdash
    \oper{ite}(\oper{true}(), y, z)
    \tmeq y : \A
  \qquad
  y : \A, z : \A \vdash
    \oper{ite}(\oper{false}(), y, z)
    \tmeq z : \A
\end{gather*}
Let $\BoolClone$ be the $\Ty$-sorted clone that is presented by
$\BoolPres$.

Consider the free $\STLCBetaEta$-algebra $F_{\STLCBetaEta} \BoolClone$,
and the $\STLCBetaEta$-algebra $\Model_\Bool$ (as defined in \Cref{example:stlc-algebras}) with $\Bool = \{\btt, \bff\}$.
The former should be thought of as containing the terms of the STLC with
booleans (we make this precise below); the latter is the usual model in
$\Set$.
Both have clone homomorphisms from $\BoolClone$: the free algebra has
$\eta_{\BoolClone} : \BoolClone \to F_{\STLCBetaEta} \BoolClone$; the model $\Model_\Bool$ has
the unique $g : \BoolClone \to \Model_\Bool$ such that
\begin{gather*}
  g_{\Gamma; \basety} (\oper{true}())
    = \zeta \mapsto \btt
  \qquad
  g_{\Gamma; \basety} (\oper{false}())
    = \zeta \mapsto \bff
  \\
  g_{\Gamma; \A} (\oper{ite}(t_1,t_2,t_3))
    = \zeta \mapsto \begin{cases}
      g_{\Gamma;\A}(t_2)(\zeta)
        &\text{if}~g_{\Gamma;\basety}(t_1)(\zeta) = \btt
      \\
      g_{\Gamma;\A}(t_3)(\zeta)
        &\text{if}~g_{\Gamma;\basety}(t_1)(\zeta) = \bff
    \end{cases}
\end{gather*}
The algebra homomorphism
$\extend g : F_{\STLCBetaEta}\BoolClone \to \Model_\Bool$ gives the
interpretation of STLC terms in the model.
Define a subset
$
  P(\basety)
    \subseteq (F_{\STLCBetaEta}\BoolClone \times \Model_\Bool)(\ec; \basety)
    = (F_{\STLCBetaEta}\BoolClone)(\ec; \basety) \times \Bool
$
by
\[
  P(\basety)
    = \{(\eta_{\BoolClone}(\oper{true}()),\btt)
      , (\eta_{\BoolClone}(\oper{false}()),\bff)\}
\]
This extends to a logical relation $P$ by the definition on
function types above and, by a simple proof, satisfies the precondition
of \cref{lemma:lr-closed}, where the clone homomorphism $f$ is
$
  \tuple{\eta_{\BoolClone}, g}
    : \BoolClone \to F_{\STLCBetaEta}\BoolClone \times \Model_\Bool
$.
Hence, for all $t \in (F_{\STLCBetaEta}\BoolClone)(\ec; A)$, we have
$
  (t, \extend g_{\ec; \A}(t))
    = \extend{\tuple{\eta_{\BoolClone}, g}}_{\ec; \A}(t)
    \in P(\A)
$.
When $\A = \basety$ this immediately implies, for all
$t, t' \in (F_{\STLCBetaEta}\BoolClone)(\ec; \basety)$, that if
$\extend g_{\ec; \basety}(t) = \extend g_{\ec; \basety}(t')$ then
$t = t'$.

This last property is seen to be adequacy of the set-theoretic model $\Model_\Bool$ as follows.
Let $\STLCBetaEtaBoolClone$ be the $\Ty$-sorted clone that is defined in
the same way as $\STLCBetaEtaClone$ (\cref{example:stlcclone})
but with additional term formers and equations (omitting the typing constraints on equations):
\[
  \begin{array}{l}
  \begin{prooftree}[center=false]
    \infer0{\Gamma \vdash \strue : \basety}
  \end{prooftree}
  \\[1.5ex]
  \begin{prooftree}[center=false]
    \infer0{\Gamma \vdash \sfalse : \basety}
  \end{prooftree}
  \end{array}
  \qquad
  \begin{prooftree}
    \hypo{\Gamma \vdash t_1 : \basety}
    \hypo{\Gamma \vdash t_2 : \A}
    \hypo{\Gamma \vdash t_3 : \A}
    \infer3{\Gamma \vdash \sif{t_1}{t_2}{t_3} : \A}
  \end{prooftree}
  \qquad
  \begin{array}{r@{~\tmeq_{\beta\eta}~}l}
    \sif{\mathrlap{\strue}\phantom{\sfalse}}{t_2}{t_3} & t_2 \\
    \sif{\sfalse}{t_2}{t_3} & t_3 \\
  \end{array}
\]
$\STLCBetaEtaBoolClone$ forms an $\STLCBetaEta$-algebra, and there is a
clone homomorphism $\eta : \BoolClone \to \STLCBetaEtaClone$ making it
into the free $\STLCBetaEta$-algebra on $\BoolClone$.
Hence we can apply the method above with
$F_{\STLCBetaEta}\BoolClone = \STLCBetaEtaBoolClone$.
The algebra homomorphism
$
  \extend g : \STLCBetaEtaBoolClone
    \to \Model_\Bool
$
sends each term $\Gamma \vdash t : \A$ to its interpretation as a
function $\prod_i \Model_\Bool\sem{\Gamma_i}\to \Model_\Bool{\sem{\A}}$.
Adequacy is therefore exactly the property that
$\extend g_{\ec; \basety}(t) = \extend g_{\ec; \basety}(t')$
implies $t = t'$.

\subsection{Open terms and normalization}
\label{section:open-terms-normalization}
As a second example, we show that every term of the STLC is equal (up to
$\beta\eta$-equality) to one in \emph{$\eta$-long $\beta$-normal form}
(we define these normal forms below).
The proof mostly follows Fiore~\cite{fiore2002semantic}, except that we reason abstractly
using the universal property of free algebras via our induction
principle.
It makes use of \emph{Kripke logical relations} (with varying arity), which
were introduced by Jung and Tiuryn~\cite{jung1993new} to study
$\lambda$-definability.

We first show that our induction principle enables arguments using
Kripke logical relations over the STLC.
Fix a $\STLCBetaEta$-algebra $(\Y, \sem{{-}})$, homomorphism
$f : \X \to \Y$ from a clone $\X$, and a subset
$P(\Gamma;\basety) \subseteq Y(\Gamma;\basety)$ for each $\Gamma$.
We extend $P$ from the base type $\basety$ to all types by
\[
  P(\Gamma;\mspace{-1mu}A{\To}B)
    = \{\mspace{-1mu}t\mspace{2mu}{\in}\mspace{2mu} Y(\Gamma;\mspace{-2mu}A{\To}B)
    \mid
    \forall \Delta,\mspace{-2mu}
    \brho\mspace{2mu}{\in}\mspace{2mu}\Var S(\Delta;\mspace{-3mu}\Gamma),
    a\mspace{2mu}{\in}\mspace{2mu}P(\Delta;\mspace{-2mu}A).~
      \sem{\oper{app}}_{\Delta} (\subst{t}{\rename \brho}, a)
        \mspace{3mu}{\in}\mspace{3mu} P(\Delta;\mspace{-3mu}B)
  \mspace{-1mu}\}
\]
This is the standard definition of a Kripke logical relation on function
types (other than using all renamings $\brho$ rather than just
weakenings, which is inessential).
We therefore have a family of subsets
$P(\Gamma; A) \subseteq Y(\Gamma; A)$,
to which we apply \cref{corollary:open-terms} and obtain the
following.
\begin{lemma}\label{lemma:lr-open}
  If the family of subsets $P$ satisfies
  \begin{itemize}
    \item for every context $\Gamma$ we have
      $\var[\Gamma] \in P(\Gamma; \Gamma)$;
    \item for every variable renaming $\brho \in \Var S(\Delta; \Gamma)$
      and term $t \in P(\Gamma; \basety)$ we have
      $\subst{t}{\rename \brho} \in P(\Delta; \basety)$;
    \item for every term $t \in X(\Gamma; B)$ and
      substitution $\boldsymbol\sigma \in P(\Delta; \Gamma)$ we have
      $\subst{(f_{\Gamma; B} t)}{\boldsymbol\sigma} \in P(\Delta; B)$,
  \end{itemize}
  then, for every free term $t \in (F_{\STLCBetaEta} \X)(\Gamma; A)$, we have
  $\extend f_{\Gamma; A}(t) \in P(\Gamma; A)$.
\end{lemma}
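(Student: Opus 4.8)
The plan is to apply \cref{corollary:open-terms} to the derived predicate $\osubstpred P$, just as \cref{lemma:lr-closed} applied \cref{corollary:closed-terms} to $\substpred P$. Under this reduction the three bullets of \cref{corollary:open-terms} become, respectively, the first hypothesis of the lemma (membership of variables, which is verbatim the same statement); closure of $\osubstpred P$ under the operators $\oper{app}$ and $\oper{abs}$; and closure of $\osubstpred P$ under $f$ --- and this last condition unfolds, by the definition of $\osubstpred P$, to exactly the third hypothesis. The conclusion $\extend f_{\Gamma; A}(t) \in P(\Gamma; A)$ is then immediate from the corollary, so the only real work is closure under the two operators.

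Before treating the operators I would establish the auxiliary fact that each $P(-; A)$ is closed under variable renaming: if $t \in P(\Gamma; A)$ and $\brho \in \Var S(\Delta; \Gamma)$, then $\subst{t}{\rename\brho} \in P(\Delta; A)$. For $A = \basety$ this is the second hypothesis. For $A = A_1 {\To} A_2$ it follows directly from the function-type clause defining $P$, since any further renaming $\brho'$ can be absorbed using that $\rename$ is a homomorphism, so that $\subst{(\subst{t}{\rename\brho})}{\rename\brho'} = \subst{t}{\rename(\brho \compose \brho')}$ with $\brho \compose \brho'$ again a renaming. This is precisely the point at which Kripke relations (quantifying over all renamings) are needed, and where a plain logical relation would fail.

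Closure under $\oper{app}$ is the easy case. Given $t_1 \in \osubstpred P(\Gamma; A {\To} B)$, $t_2 \in \osubstpred P(\Gamma; A)$ and $\boldsymbol\sigma \in P(\Delta; \Gamma)$, commutation of substitution with operators (the lifts being trivial, as $\oper{app}$ binds nothing) gives $\subst{(\sem{\oper{app}}_\Gamma(t_1, t_2))}{\boldsymbol\sigma} = \sem{\oper{app}}_\Delta(\subst{t_1}{\boldsymbol\sigma}, \subst{t_2}{\boldsymbol\sigma})$. The definition of $\osubstpred P$ yields $\subst{t_1}{\boldsymbol\sigma} \in P(\Delta; A {\To} B)$ and $\subst{t_2}{\boldsymbol\sigma} \in P(\Delta; A)$, so instantiating the function-type clause of $P$ at the identity renaming $\var[\Delta]$ places this term in $P(\Delta; B)$, as required.

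The main obstacle is closure under $\oper{abs}$. Given $t \in \osubstpred P(\Gamma, A; B)$ and $\boldsymbol\sigma \in P(\Delta; \Gamma)$, I must show $\subst{(\sem{\oper{abs}}_\Gamma(t))}{\boldsymbol\sigma} \in P(\Delta; A {\To} B)$; unfolding the function-type clause, I fix $\brho \in \Var S(\Delta'; \Delta)$ and $a \in P(\Delta'; A)$ and must show $\sem{\oper{app}}_{\Delta'}(\subst{(\subst{(\sem{\oper{abs}}_\Gamma(t))}{\boldsymbol\sigma})}{\rename\brho}, a) \in P(\Delta'; B)$. The calculation pushes both substitutions through $\oper{abs}$ using commutation with operators, eliminates the resulting $\oper{app}$--$\oper{abs}$ redex with the \ref{eq:algebra-beta} law, and then simplifies the composite substitution by repeated use of associativity and the definitions of $\lift[A]$ and $\wk$ (in particular $\rename\wk[\Delta']_{A} \compose (\var[\Delta'], a) = \var[\Delta']$, while the bound variable is sent to $a$). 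The upshot is that the term collapses to $\subst{t}{\boldsymbol\sigma \compose \rename\brho, a}$. Since $t \in \osubstpred P(\Gamma, A; B)$, it then suffices to check that $(\boldsymbol\sigma \compose \rename\brho, a) \in P(\Delta'; \Gamma, A)$: the final component $a$ lies in $P(\Delta'; A)$ by assumption, and $\boldsymbol\sigma \compose \rename\brho \in P(\Delta'; \Gamma)$ because each $\sigma_i \in P(\Delta; \Gamma_i)$ and $P$ is closed under the renaming $\brho$ by the auxiliary fact. I expect the bookkeeping of the lifted substitution through the $\beta$-redex to be the fiddliest part, though it is entirely routine clone algebra.
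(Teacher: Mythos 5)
Your proposal is correct and takes essentially the same route as the paper's proof: reduce to \cref{corollary:open-terms}, extend renaming-closure from the base type to all types by induction on the type, handle $\oper{app}$ via commutation of operators with substitution and the identity renaming, and handle $\oper{abs}$ by using the ($\beta$) law to collapse the term to $\subst{t}{\bsigma \compose \rename\brho, a}$ and then invoking $t \in \osubstpred P(\Gamma, A; B)$ together with renaming-closure of $P$. The only cosmetic slip is saying you apply the corollary ``to $\osubstpred P$'' when it is invoked on the family $P$ itself (its hypotheses are what concern $\osubstpred P$), but the substance is identical.
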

\begin{proof}
  The only non-trivial assumption of \cref{corollary:open-terms} is
  closure under operators.
  For closure under $\oper{app}$, if
  $t \in \osubstpred P(\Gamma; A \To B)$ and
  $u \in \osubstpred P(\Gamma; A)$, then, for all
  $\bsigma \in P(\Delta;\Gamma)$, we have
  $
    \subst{(\sem{\oper{app}}_\Gamma(t,u))}{\bsigma}
    = \sem{\oper{app}}_\Delta(\subst t \bsigma, \subst u \bsigma)
  $,
  because interpretations of operators commute with substitution; this
  is an element of $P(\Delta; B)$ using $\subst t \bsigma \in P(\Delta; A\To B)$
  on the identity variable-renaming.
  For closure under $\oper{abs}$, suppose that
  $t \in \osubstpred P(\Gamma, A; B)$.
  The assumption of the present lemma that $P$ is closed under variable
  renamings at the base type $\basety$ extends to all types $A$ by an
  easy induction on $A$.
  For every $\bsigma \in P(\Delta; \Gamma)$,
  $\brho \in \Var S(\Xi; \Delta)$, and $a \in P(\Xi; A)$, we then have that
  $\subst t {(\bsigma \compose \rename \brho), a} \in P(\Xi; B)$.
  Preservation of substitution by $\sem{\oper{abs}}$, and the
  \ref{eq:algebra-beta} law, together imply that
  $
    \sem{\oper{app}}_{\Xi}(
      \subst{\subst{(\sem{\oper{abs}}_{\Gamma}(t))}{\bsigma}}
        {\rename \brho})
    =
    \subst t {(\bsigma \compose \rename \brho), a} \in P(\Xi; B)
  $.
  Hence $\sem{\oper{abs}}_{\Gamma}(t) \in \osubstpred P(\Gamma; A\To B)$
  as required.
\end{proof}

We use this to show normalization as follows.
\emph{Normal forms} $\Gamma \vdash_n t : A$ are defined mutually
inductively with the \emph{neutral forms} $\Gamma \vdash_m t : A$ by
the following rules:
\[
  \begin{prooftree}[center=false]
    \infer0{\Gamma, x : A, \Delta \vdash_m x : A}
  \end{prooftree}
  \quad
  \begin{prooftree}[center=false]
    \hypo{\Gamma \vdash_m f : A \To B}
    \hypo{\Gamma \vdash_n a : A}
    \infer2{\Gamma \vdash_m \sapp\,f\,a : B}
  \end{prooftree}
  \quad
  \begin{prooftree}[center=false]
    \hypo{\Gamma \vdash_m t : \basety}
    \infer1{\Gamma \vdash_n t : \basety}
  \end{prooftree}
  \quad
  \begin{prooftree}[center=false]
    \hypo{\Gamma, x : A \vdash_n t : B}
    \infer1{\Gamma \vdash_n \lambda x : A.\,t : A \To B}
  \end{prooftree}
\]
Consider the initial $\STLCBetaEta$-algebra, which is the clone
$\STLCBetaEtaClone$ of STLC terms up to $\tmeq_{\beta\eta}$.
We write $\Nf(\Gamma; A)$ for the subset of STLC terms that are equivalent to a term in normal form under $\tmeq_{\beta\eta}$; and likewise write
$\Ne(\Gamma; A)$ for neutral forms.
We consider both as subsets of $\STLCBetaEtaTerms(\Gamma; A)$; both are
closed under variable renaming.
The family of subsets we consider,
$P(\Gamma; A) \subseteq \STLCBetaEtaTerms(\Gamma; A)$,
is defined on the base type as
$P(\Gamma; \basety) = \Nf(\Gamma; \basety)$, and on other types by the
logical relations definition above.
By a simple induction on the sort $A$, one can show that
$\Ne(\Gamma; A) \subseteq P(\Gamma; A) \subseteq \Nf(\Gamma; A)$
(e.g.~as in~\cite{fiore2002semantic}).
Since variables are neutral, this tells us in particular that
$\var[\Gamma] \in P(\Gamma; \Gamma)$ for all $\Gamma$.
It then follows from \cref{lemma:lr-open} that
$t = \extend{(\rename)}(t) \in P(\Gamma; A) \subseteq \Nf(\Gamma; A)$
for all $t \in \STLCBetaEtaTerms(\Gamma; A)$, and so that every term of the
STLC is $\beta\eta$-equivalent to one in normal form.

\section{Comparison to other approaches}
\label{comparison}

While we promote abstract clones as an elementary approach to simple
type theories (qua multisorted second-order abstract syntax), there are
several equivalent concepts that have been used to similar effect.
We give a brief overview of the existing literature on the subject and
a comparison with our work; we give references where possible, but unfortunately some of the relationships here exist only in the mathematical folklore.

\paragraph*{Presheaves and substitution monoids}

The study of second-order abstract syntax was initiated by Fiore et al.\ \cite{fiore1999abstract, fiore2002semantic},
who represent term structure using presheaf categories.
In their setting, one considers functors $T : \mathbb L(S)^{\mathrm{op}} \to \Set^S$, where $\mathbb L(S)$ is the category in which objects are contexts $\Gamma$, and morphisms $\brho : \Delta \to \Gamma$ are variable renamings $\brho \in \Var S (\Gamma; \Delta)$ (recall \Cref{subst-and-renaming}).
The $S$-indexed sets $T(\Gamma)$ consist of the sorted terms in context $\Gamma$; while
the functions $T(\brho)$ rename the variables inside the terms to change their context.
Substitution is accounted for by considering the monoidal structure
$(\bullet, V)$ on $[\mathbb L(S)^{\mathrm{op}}, \Set^S]$, in which $T \bullet T'$
represents (for each context $\Gamma$) the simultaneous substitution of
each variable in $T$ with a term from $T'$, and $V$ represents the variables in each context.
Monoids with respect to this structure are equipped with variables and substitution
operations; they are equivalently abstract clones~\cite[Proposition~3.4]{fiore1999abstract}.
Fiore and Hur~\cite{fiore2010second} define $\SIGMA$-algebras as monoids
in $[\mathbb L(S)^{\mathrm{op}}, \Set^S]$ equipped with interpretations of
the operators of a presentation $\SIGMA$ satisfying its equations; they are equivalent to our
$\SIGMA$-algebras.
Our setting is therefore equivalent to that of Fiore et al. The advantage of our approach is that abstract clones require less categorical machinery; for those comfortable with category theory, this will be less of a concern.

There are some technical differences with previous work.
Fiore and Hur~\cite{fiore2010second} show the existence of the free
$\SIGMA$-algebras on each presheaf $T$; in light of our free algebra
result, the construction of the free algebra on $T$ can be factored into two steps:
constructing the free clone $\X$ on $T$ by freely adding variables and
substitution, and then taking the free $\SIGMA$-algebra on the clone
$\X$.
In our examples above, we begin with a clone that admits substitution,
and hence do not freely add substitution.
In a separate treatment, Hofmann~\cite{hofmann1999semantical} gives an induction principle for
the $\lambda$-calculus using presheaves, but only considers predicates over
closed terms; we obtain induction for closed terms as a corollary of
induction over open~terms.

\paragraph*{Cartesian multicategories}

Each abstract clone $\X$ has an identity operation for every sort $\B$, given by the unique variable projection $\var[[\B]]_1 \in X(\B; \B)$, along with admissible operations of exchange, weakening, and contraction. In this way, the sets of terms $X(\Gamma; \A)$ form the structure of a \emph{cartesian multicategory} with object set $S$ (intuitively a category whose morphisms may have multiple inputs, subject to the structural properties of first-order equational logic).
Conversely, every cartesian multicategory gives rise to an abstract clone. Thus, one could carry out the development of this paper in the context of cartesian multicategories (cf.\ \cite[Section~9]{arkor2020algebraic}). Clones are our preferred choice, because the definition of clone (in which projections are the primary operation) provides a more minimal axiomatisation than that of cartesian multicategory (in which the structural operations are primary).
Note that one-object cartesian multicategories are usually called \emph{cartesian operads}, which correspond to monosorted abstract clones.

\paragraph*{Algebraic theories}

The traditional approach to describing first-order algebraic structure in categorical logic is through \emph{algebraic theories} \cite{lawvere1963functorial}. An algebraic theory is represented by a category with cartesian products, which permit the multimorphisms of a cartesian multicategory to be represented by morphisms from a product: for a context $[\A_1, \ldots, \A_n]$, the terms $x_1 : \A_1, \ldots, x_n : \A_n \vdash t : \B$ are represented by a hom-set $X(\A_1 \times \cdots \times \A_n, \B)$. The relationship between cartesian multicategories and algebraic theories is the notion of \emph{representability} for cartesian multicategories \cite{pisani2014sequential}.
Second-order structure in the context of algebraic theories is captured by \emph{second-order algebraic theories} \cite{fiore2010algebraic, mahmoud2011second, arkor2020higher}, which generalize the first-order setting by introducing exponential objects that represent function types. Every second-order presentation $\SIGMA$ induces a second-order algebraic theory, the algebras for which are given by taking coslices over $\SIGMA$~\cite{arkor2020higher}.

\paragraph*{Monads and relative monads}

There is a classical correspondence in category theory between algebraic theories and certain monads on the category of sets \cite{linton1969outline}, which in turn are equivalent to $J$-relative monads, for $J$ the inclusion of finite sets into sets \cite{altenkirch2010relative}. This has led to a line of investigation in which monads are used directly for second-order abstract syntax \cite{hirschowitz2007modules, hirschowitz2010modules, ahrens2016modules, ahrens2019modular, hirschowitz2020modules}. There are strong connections between this approach and that of presheaves and substitution monoids: for a detailed comparison, see the thesis of Zsid\'o~\cite{zsido2010typed}. In particular, the distinction between abstract clones and $J$-relative monads is slight, and the results of our development could equivalently be rephrased as statements about relative monads (cf.~\cite{arkor2020higher}).

\section{Conclusion}

We have shown that the abstract syntax of simple type theories has an
elementary treatment using abstract clones.
The framework we describe allows the specification of the terms and
equations of type theories via second-order
presentations~\cite{fiore2010second,fiore2010algebraic}.
Free algebras then give the syntax along with an accompanying induction principle, which we show
enables abstract proofs of non-trivial properties such as adequacy.
We emphasize that abstract clones axiomatize the syntax only of
\emph{simple} type theories:
clones cannot express linear types, dependent types, or type theories
in which variables stand only for certain classes of term
(e.g.~polarized type theories~\cite{zeilberger2009logical}, and
the call-by-value $\lambda$-calculus). In some cases, analogous structures are already known (for instance, symmetric multicategories for linear type theories~\cite{tanaka2000abstract, huot2016operads}); for others, such as dependent type theories, this remains an open problem.

\bibliography{fscd21}

\begin{thebibliography}{10}

\bibitem{aczel1978general}
Peter Aczel.
\newblock A general {Church--Rosser} theorem.
\newblock Unpublished manuscript, 1978.

\bibitem{ahrens2016modules}
Benedikt Ahrens.
\newblock Modules over relative monads for syntax and semantics.
\newblock {\em Mathematical Structures in Computer Science}, 26(1):3--37, 2016.
\newblock \href {https://doi.org/10.1017/S0960129514000103}
  {\path{doi:10.1017/S0960129514000103}}.

\bibitem{ahrens2019modular}
Benedikt Ahrens, Andr{\'e} Hirschowitz, Ambroise Lafont, and Marco Maggesi.
\newblock Modular specification of monads through higher-order presentations.
\newblock In {\em 4th International Conference on Formal Structures for
  Computation and Deduction (FSCD 2019)}, volume 131, pages 1--16. Schloss
  Dagstuhl--Leibniz-Zentrum f{\"u}r Informatik, 2019.
\newblock \href {https://doi.org/10.4230/LIPIcs.FSCD.2019.6}
  {\path{doi:10.4230/LIPIcs.FSCD.2019.6}}.

\bibitem{altenkirch2010relative}
Thorsten Altenkirch, James Chapman, and Tarmo Uustalu.
\newblock Monads need not be endofunctors.
\newblock In {\em Foundations of Software Science and Computational
  Structures}, pages 297--311. Springer, 2010.
\newblock \href {https://doi.org/10.1007/978-3-642-12032-9_21}
  {\path{doi:10.1007/978-3-642-12032-9_21}}.

\bibitem{arkor2020algebraic}
Nathanael Arkor and Marcelo Fiore.
\newblock Algebraic models of simple type theories: a polynomial approach.
\newblock In {\em Proceedings of the 35th Annual ACM/IEEE Symposium on Logic in
  Computer Science}, page 88–101. Association for Computing Machinery, 2020.
\newblock \href {https://doi.org/10.1145/3373718.3394771}
  {\path{doi:10.1145/3373718.3394771}}.

\bibitem{arkor2020higher}
Nathanael Arkor and Dylan McDermott.
\newblock Higher-order algebraic theories.
\newblock {\em Preprint}, 2020.
\newblock URL:
  \url{https://www.cl.cam.ac.uk/~na412/Higher-order%20algebraic%20theories.pdf}.

\bibitem{barr1971exact}
Michael Barr.
\newblock Exact categories.
\newblock In {\em Exact categories and categories of sheaves}, pages 1--120.
  Springer, 1971.

\bibitem{birkhoff1935structure}
Garrett Birkhoff.
\newblock On the structure of abstract algebras.
\newblock In {\em Mathematical Proceedings of the Cambridge Philosophical
  Society}, volume~31, pages 433--454. Cambridge University Press, 1935.
\newblock \href {https://doi.org/10.1017/S0305004100013463}
  {\path{doi:10.1017/S0305004100013463}}.

\bibitem{birkhoff1970heterogeneous}
Garrett Birkhoff and John~D Lipson.
\newblock Heterogeneous algebras.
\newblock {\em Journal of Combinatorial Theory}, 8(1):115--133, 1970.
\newblock \href {https://doi.org/10.1016/S0021-9800(70)80014-X}
  {\path{doi:10.1016/S0021-9800(70)80014-X}}.

\bibitem{fiore2002semantic}
Marcelo Fiore.
\newblock Semantic analysis of normalisation by evaluation for typed lambda
  calculus.
\newblock In {\em Proceedings of the 4th ACM SIGPLAN International Conference
  on Principles and Practice of Declarative Programming}, pages 26--37.
  Association for Computing Machinery, 2002.
\newblock \href {https://doi.org/10.1145/571157.571161}
  {\path{doi:10.1145/571157.571161}}.

\bibitem{fiore2008second}
Marcelo Fiore.
\newblock Second-order and dependently-sorted abstract syntax.
\newblock In {\em Proceedings of the 23rd Annual IEEE Symposium on Logic in
  Computer Science}, pages 57--68. IEEE, 2008.
\newblock \href {https://doi.org/10.1109/LICS.2008.38}
  {\path{doi:10.1109/LICS.2008.38}}.

\bibitem{fiore2013multiversal}
Marcelo Fiore and Makoto Hamana.
\newblock Multiversal polymorphic algebraic theories: syntax, semantics,
  translations, and equational logic.
\newblock In {\em Proceedings of the 28th Annual ACM/IEEE Symposium on Logic in
  Computer Science}, pages 520--529. IEEE, 2013.
\newblock \href {https://doi.org/10.1109/LICS.2013.59}
  {\path{doi:10.1109/LICS.2013.59}}.

\bibitem{fiore2010second}
Marcelo Fiore and Chung-Kil Hur.
\newblock Second-order equational logic.
\newblock In {\em Computer Science Logic}, pages 320--335. Springer, 2010.
\newblock \href {https://doi.org/10.1007/978-3-642-15205-4_26}
  {\path{doi:10.1007/978-3-642-15205-4_26}}.

\bibitem{fiore2010algebraic}
Marcelo Fiore and Ola Mahmoud.
\newblock Second-order algebraic theories.
\newblock In {\em Mathematical Foundations of Computer Science}, pages
  368--380. Springer, 2010.
\newblock \href {https://doi.org/10.1007/978-3-642-15155-2_33}
  {\path{doi:10.1007/978-3-642-15155-2_33}}.

\bibitem{fiore2014functorial}
Marcelo Fiore and Ola Mahmoud.
\newblock Functorial semantics of second-order algebraic theories, 2014.
\newblock \href {http://arxiv.org/abs/1401.4697} {\path{arXiv:1401.4697}}.

\bibitem{fiore1999abstract}
Marcelo Fiore, Gordon Plotkin, and Daniele Turi.
\newblock Abstract syntax and variable binding.
\newblock In {\em Proceedings. 14th Symposium on Logic in Computer Science
  (Cat. No. PR00158)}, pages 193--202. IEEE, 1999.
\newblock \href {https://doi.org/10.1109/LICS.1999.782615}
  {\path{doi:10.1109/LICS.1999.782615}}.

\bibitem{goguen1985completeness}
Joseph Goguen and Jos{\'e} Meseguer.
\newblock Completeness of many-sorted equational logic.
\newblock {\em Houston Journal of Mathematics}, 11(3):307--334, 1985.

\bibitem{hamana2004free}
Makoto Hamana.
\newblock Free $\sigma$-monoids: A higher-order syntax with metavariables.
\newblock In {\em Asian Symposium on Programming Languages and Systems}, pages
  348--363. Springer, 2004.
\newblock \href {https://doi.org/10.1007/978-3-540-30477-7_23}
  {\path{doi:10.1007/978-3-540-30477-7_23}}.

\bibitem{hirschowitz2020modules}
Andr{\'e} Hirschowitz, Tom Hirschowitz, and Ambroise Lafont.
\newblock Modules over monads and operational semantics.
\newblock In {\em 5th International Conference on Formal Structures for
  Computation and Deduction (FSCD 2020)}, volume 167, pages 12--1. Schloss
  Dagstuhl--Leibniz-Zentrum f{\"u}r Informatik, 2020.
\newblock \href {https://doi.org/10.4230/LIPIcs.FSCD.2020.12}
  {\path{doi:10.4230/LIPIcs.FSCD.2020.12}}.

\bibitem{hirschowitz2007modules}
Andr{\'e} Hirschowitz and Marco Maggesi.
\newblock Modules over monads and linearity.
\newblock In {\em Logic, Language, Information, and Computation}, pages
  218--237. Springer, 2007.
\newblock \href {https://doi.org/10.1007/978-3-540-73445-1_16}
  {\path{doi:10.1007/978-3-540-73445-1_16}}.

\bibitem{hirschowitz2010modules}
Andr{\'e} Hirschowitz and Marco Maggesi.
\newblock Modules over monads and initial semantics.
\newblock {\em Information and Computation}, 208(5):545--564, 2010.
\newblock \href {https://doi.org/10.1016/j.ic.2009.07.003}
  {\path{doi:10.1016/j.ic.2009.07.003}}.

\bibitem{hofmann1999semantical}
Martin Hofmann.
\newblock Semantical analysis of higher-order abstract syntax.
\newblock In {\em Proceedings. 14th Symposium on Logic in Computer Science
  (Cat. No. PR00158)}, pages 204--213. IEEE, 1999.
\newblock \href {https://doi.org/10.1109/LICS.1999.782616}
  {\path{doi:10.1109/LICS.1999.782616}}.

\bibitem{huot2016operads}
Mathieu Huot.
\newblock Operads with algebraic structure.
\newblock {\em MPRI Internship Report}, 2016.
\newblock URL: \url{http://users.ox.ac.uk/~scro3639/M1_Report.pdf}.

\bibitem{hyland2012classical}
J.M.E. Hyland.
\newblock Classical lambda calculus in modern dress.
\newblock {\em Mathematical Structures in Computer Science}, 27(5):762–781,
  2017.
\newblock \href {https://doi.org/10.1017/S0960129515000377}
  {\path{doi:10.1017/S0960129515000377}}.

\bibitem{hyland1980tripos}
J.M.E. Hyland, P.T. Johnstone, and A.M. Pitts.
\newblock Tripos theory.
\newblock In {\em Mathematical Proceedings of the Cambridge Philosophical
  Society}, volume~88, pages 205--232. Cambridge University Press, 1980.
\newblock \href {https://doi.org/10.1017/S0305004100057534}
  {\path{doi:10.1017/S0305004100057534}}.

\bibitem{jacobs1999categorical}
Bart Jacobs.
\newblock {\em Categorical Logic and Type Theory}.
\newblock Number 141 in Studies in Logic and the Foundations of Mathematics.
  North Holland, 1999.

\bibitem{jung1993new}
Achim Jung and Jerzy Tiuryn.
\newblock A new characterization of lambda definability.
\newblock In {\em Typed Lambda Calculi and Applications}, pages 245--257.
  Springer, 1993.
\newblock \href {https://doi.org/10.1007/BFb0037110}
  {\path{doi:10.1007/BFb0037110}}.

\bibitem{lambek1988higher}
J.~Lambek and P.~J. Scott.
\newblock {\em Introduction to Higher Order Categorical Logic}.
\newblock Cambridge University Press, 1988.

\bibitem{lawvere1963functorial}
F.~William Lawvere.
\newblock Functorial semantics of algebraic theories.
\newblock {\em Proceedings of the National Academy of Sciences of the United
  States of America}, 50(5):869--872, 1963.
\newblock \href {https://doi.org/10.1073/pnas.50.5.869}
  {\path{doi:10.1073/pnas.50.5.869}}.

\bibitem{lehmann1981algebraic}
Daniel~J. Lehmann and Michael~B. Smyth.
\newblock Algebraic specification of data types: A synthetic approach.
\newblock {\em Mathematical systems theory}, 14(1):97--139, 1981.
\newblock \href {https://doi.org/10.1007/BF01752392}
  {\path{doi:10.1007/BF01752392}}.

\bibitem{linton1969outline}
F.E.J. Linton.
\newblock An outline of functorial semantics.
\newblock In {\em Seminar on Triples and Categorical Homology Theory}, pages
  7--52. Springer, 1969.
\newblock \href {https://doi.org/10.1007/BFb0083080}
  {\path{doi:10.1007/BFb0083080}}.

\bibitem{mahmoud2011second}
Ola Mahmoud.
\newblock Second-order algebraic theories.
\newblock Technical Report UCAM-CL-TR-807, University of Cambridge, Computer
  Laboratory, October 2011.
\newblock \href {https://doi.org/10.48456/tr-807} {\path{doi:10.48456/tr-807}}.

\bibitem{pisani2014sequential}
Claudio Pisani.
\newblock Sequential multicategories.
\newblock {\em Theory and Applications of Categories}, 29(19):496--541, 2014.
\newblock URL: \url{http://www.tac.mta.ca/tac/volumes/29/19/29-19abs.html}.

\bibitem{pitts2002tripos}
Andrew~M. Pitts.
\newblock Tripos theory in retrospect.
\newblock {\em Mathematical structures in computer science}, 12(3):265--279,
  2002.
\newblock \href {https://doi.org/10.1017/S096012950200364X}
  {\path{doi:10.1017/S096012950200364X}}.

\bibitem{tanaka2000abstract}
Miki Tanaka.
\newblock Abstract syntax and variable binding for linear binders.
\newblock In {\em Mathematical Foundations of Computer Science}, pages
  670--679. Springer, 2000.
\newblock \href {https://doi.org/10.1007/3-540-44612-5_62}
  {\path{doi:10.1007/3-540-44612-5_62}}.

\bibitem{taylor1993abstract}
Walter Taylor.
\newblock Abstract clone theory.
\newblock In {\em Algebras and orders}, volume 389 of {\em {NATO} {ASI} Series
  C}, pages 507--530. Springer, 1993.
\newblock \href {https://doi.org/10.1007/978-94-017-0697-1_11}
  {\path{doi:10.1007/978-94-017-0697-1_11}}.

\bibitem{zeilberger2009logical}
Noam Zeilberger.
\newblock {\em The Logical Basis of Evaluation Order and Pattern-matching}.
\newblock PhD thesis, Carnegie Mellon University, 2009.

\bibitem{zsido2010typed}
Julianna Zsid{\'o}.
\newblock {\em Typed abstract syntax}.
\newblock PhD thesis, Universit{\'e} Nice Sophia Antipolis, 2010.

\end{thebibliography}

\appendix
\section{Normalization with global state}
\label{normalization-global-state}

As a further example of the application of abstract clones to problems motivated by simple type theories, we prove a normalization result for the STLC with $V$-valued global state: concretely, this calculus is given by the free algebra of the second-order
presentation $\STLCBetaEta$ of the STLC with $\beta\eta$-equality
on the clone $\GSClone V$ of $V$-valued global state,
whose syntax is described in \cref{example:stlc-initial}.
The proof is similar to normalization of the STLC without global state
(\cref{section:open-terms-normalization}); in particular, we reuse
\cref{lemma:lr-open}.

Recall that for $V = \{v_1, \dots, v_k\}$, the free algebra consists of
the syntax of the STLC extended by the additional term formers $\sget$ and
$\sput_{v_i}$.
Normal and neutral forms are defined as in
\cref{section:open-terms-normalization}, except with
\[
  \text{the rule~~}
  \begin{prooftree}
    \hypo{\Gamma \vdash_m t : \basety}
    \infer1{\Gamma \vdash_n t : \basety}
  \end{prooftree}
  \text{~~replaced by~~}
  \begin{prooftree}
    \hypo{\Gamma \vdash_m t_1 : \basety}
    \hypo{\cdots}
    \hypo{\Gamma \vdash_m t_k : \basety}
    \infer3[\textnormal{($w_1, \dots, w_k \in V$)}]
      {\Gamma \vdash_n
        \sget(\sput_{w_1}(t_1), \dots, \sput_{w_k}(t_k)) : \basety}
  \end{prooftree}
  \text{.}
\]
Again we write $\Nf(\Gamma; A)$ (respectively $\Ne(\Gamma; A)$) for the
subsets of terms equal to a normal (respectively neutral) form, and
define the logical relation $P(\Gamma; A)$ on the base type as
$P(\Gamma; \basety) = \Nf(\Gamma; \basety)$, and on other types by the
logical relations definition in \cref{section:open-terms-normalization}.
Again we have
$\Ne(\Gamma; A) \subseteq P(\Gamma; A) \subseteq \Nf(\Gamma; A)$ by
induction on $A$; the only difference with the previous proof is that on
base types one has $\Ne(\Gamma; \basety) \subseteq \Nf(\Gamma; \basety)$,
because for $t \in \Ne(\Gamma; \basety)$ we have
$
  t \tmeq_{\beta\eta} \sget(\sput_{v_1}(t), \dots, \sput_{v_k}(t))
  \in \Nf(\Gamma; \basety)
$.
To prove that every term is equal to one in normal form up to
$\tmeq_{\beta\eta}$, it suffices to apply \cref{lemma:lr-open} with $f$
the clone homomorphism
$\eta_{\GSClone V} : \GSClone V \to F_{\STLCBetaEta}{\GSClone V}$.
The first two assumptions of the lemma have the same proof as before.
For the third, since $\GSClone V$ is presented by $\GSPres V$ and clone
homomorphisms preserve variables and substitution, it suffices to show
that
\begin{itemize}
  \item for each $t_1, \dots, t_k \in P(\Gamma; \basety)$, we have
    $\sget(t_1, \dots, t_k) \in P(\Gamma; \basety)$;
  \item for each $t \in P(\Gamma; \basety)$ and $i \le k$, we have
    $\sput_{v_i}(t) \in P(\Gamma; \basety)$.
\end{itemize}
The first statement holds because if
$
  t_i =
  \sget(\sput_{w_{i1}}(t'_{i1}), \dots, \sput_{w_{ik}}(t'_{ik}))
$ then
\[
  \sget(t_1,  \dots, t_k)
  \tmeq_{\beta\eta}
  \sget(\sput_{w_{11}}(t'_{11}), \dots, \sput_{w_{kk}}(t'_{kk}))
  \in \Nf(\Gamma; \basety) = P(\Gamma; \basety)
\]
The second statement holds because if
$
  t =
  \sget(\sput_{w_1}(t'_1), \dots, \sput_{w_k}(t'_k))
$ then
\[
  \sput_{v_i}(t)
  ~\tmeq_{\beta\eta}~
  \sput_{w_i}(t'_i)
  ~\tmeq_{\beta\eta}~
  \sget(\sput_{w_i}(t'_i), \dots, \sput_{w_i}(t'_i))
  \in \Nf(\Gamma; \basety) = P(\Gamma; \basety)
\]

\end{document}